\newtheorem{theorem}{Theorem}[section]
\newtheorem{lemma}[theorem]{Lemma}
\newtheorem{proposition}[theorem]{Proposition}
\newtheorem{cor}[theorem]{Corollary}
\theoremstyle{remark}
\newtheorem{remark}[theorem]{Remark}
\theoremstyle{definition}
\newtheorem{definition}[theorem]{Definition}
\theoremstyle{example}
\newtheorem{example}[theorem]{Example}
\theoremstyle{notation}
\newcommand{\bra}[1]{\langle#1|}
\newcommand{\ket}[1]{|#1\rangle}
\begin{document}

\title{Multipartite quantum systems: an approach based on Markov matrices and the Gini index}            
\author{A. Vourdas}
\affiliation{Department of Computer Science,\\
University of Bradford, \\
Bradford BD7 1DP, United Kingdom\\a.vourdas@bradford.ac.uk}

\begin{abstract}
An expansion of row Markov matrices in terms of matrices related to permutations with repetitions, is introduced.
It generalises the Birkhoff-von Neumann expansion of doubly stochastic matrices in terms of permutation matrices (without repetitions).
An interpretation of the formalism in terms of sequences of integers that open random safes
described by the Markov matrices, is presented.
Various quantities that describe probabilities and correlations in this context, are discussed.
The Gini index is used to quantify the sparsity (certainty) of various probability vectors.
The formalism is used in the context of multipartite quantum systems with finite dimensional Hilbert space,
which can be viewed as quantum permutations with repetitions or as quantum safes.
The scalar product of row Markov matrices, the various Gini indices, etc, are novel probabilistic quantities that describe the statistics of multipartite quantum systems.
Local and global Fourier transforms are used to define locally dual and also globally dual statistical quantities.
The latter depend on off-diagonal elements that entangle (in general) the various components of the system.
Examples which demonstrate these ideas are also presented.

\end{abstract}
\maketitle

\section{Introduction}

Multipartite quantum systems play an important role in quantum mechanics and quantum information.
They are associated with deep quantum concepts like entanglement, correlations which are stronger than classical, violation of classical probabilistic inequalities, etc.
Various statistical quantities are used to describe these phenomena.
In this paper we combine ideas from Markov matrices, permutations with repetitions, and the Gini index in statistics, to develop novel tools for their study.
We first develop these ideas at a classical probabilistic level, and then pass to the quantum level.

Markov matrices play an important role in Markov chains \cite{M1,M2}, Artificial Intelligence, Engineering, etc.
A special case of Markov matrices are the doubly stochastic matrices which are intimately related to majorization\cite{MAJ} and have many applications.
An important theorem is the Birkhoff-von Neumann expansion of doubly stochastic matrices in terms of permutation matrices (without repetitions), which has many applications in Operational Research, in 
Linear Programming, etc. In this paper we introduce analogous expansions for row Markov matrices, in terms of matrices related to permutations with repetitions of $d$ integers in ${\mathbb Z}_d$.

A nice physical interpretation of this mathematical formalism is given, in terms of  sequences of integers that open random safes described by the Markov matrices.
Various statistical quantities like joint probabilities and correlations, are presented in this context.

Probability vectors multiplied by row Markov matrices on the right, are transformed into other probability vectors.
The sparsity of probability vectors is studied using  Lorenz values and the Gini index\cite{Gini,Gini1,Gini2,Gini3,Gini4} which have been used extensively in Mathematical Economics.
We have recently used these concepts in the context of quantum physics\cite{V2,V3}.
In the context of random safes we introduce the Gini vector that describes the sparsity 
in the local probability vector for  each of the  integers in the sequence that opens a random safe. We also introduce the total Gini index that describes the sparsity of the 
joint probabilities for the sequence of integers that opens  a random safe.

This formalism is applied to $d$-partite quantum systems where each component is described by a $d$-dimensional Hilbert space.
There is a natural link between such
quantum systems and permutations with repetitions of $d$ integers in ${\mathbb Z}_d$ (compare Eq.(\ref{A7}) with Eqs(\ref{333a}),(\ref{333})).
The present work enhances this link, and
the outcome is a plethora of statistical quantities that describe various aspects of multipartite quantum systems.
Local Fourier transforms are used to define locally dual statistical quantities.
Global Fourier transforms are used to define globally dual statistical quantities which
 depend on off-diagonal elements that entangle (in general) the various components of the system.

In section 2 we define permutations with repetitions.
In section 3 we introduce Markov matrices and explain their role in describing the statistics of random permutations with repetitions.
In section 4, we introduce an expansion for row Markov matrices, in terms of matrices related to permutations with repetitions.
An important concept here is the scalar product of two row Markov matrices and its physical interpretation.

In section 5 we introduce the Lorenz values of probability vectors, and the related topic of majorization. In the present context majorization
provides an ordinal description of the sparsity in probability vectors.

In section 6 Lorenz values are used to define the Gini index as an indicator of the sparsity (certainty) of probability vectors.
Both the Gini index and the variance (standard deviation) are quantities that describe the variability in probability distributions. 
Roughly speaking, the Gini index is the sum of the absolute values of the differences of the probabilities, while the variance 
is the sum of the squares of the differences of the probabilities (remark \ref{rem1}).
In the context of random safes we introduce local and total Gini indices.

In section 7 we consider a quantum system with $d$-dimensional Hilbert space $H_d$, and variables in ${\mathbb Z}_d$.
We have shown in ref\cite{V2} that the Gini indices for 
probability vectors related to positions and momenta (i.e., dual bases through a Fourier transform) cannot both be very sparse
and we quantified this using Gini indices. It is one way to express the uncertainty principle in this context.
Here we simply state this result without proof because we generalise it later for multipartite systems.

In section 8 we consider $d$-partite quantum systems where each component is described by the $d$-dimensional Hilbert space $H_d$.
They can be viewed as quantum permutations with repetitions (or as quantum safes)
and they are extensions of their random counterparts, because they involve superpositions.
This point of view brings the statistical techniques related to
row Markov matrices, Gini indices, etc, into multipartite quantum systems.
It leads to many novel quantities that describe the statistics of quantum systems.

In quantum mechanics it is interesting to do statistics with respect to dual bases related through a Fourier transform, and study uncertainty relations that link them.
In multipartite systems we have local and global Fourier transforms which are used in sections 8,9, to define locally dual and also globally dual statistical quantities.
We conclude in section 10 with a discussion of our results.

\section{Generalisations of permutations}
\subsection{Permutations}
Let ${\mathbb Z}_d$ be the ring of integers modulo $d$, and  $\pi$ a permutation of its elements:
\begin{eqnarray}
(0,1,...,d-1)\;\overset{\pi} \rightarrow\;(\pi(0),\pi(1),...,\pi(d-1)).
\end{eqnarray}
$\pi$ is a bijective map from ${\mathbb Z}_d$ to ${\mathbb Z}_d$, and is an element of the symmetric group ${\cal S}$  which has $d!$ elements.
Multiplication in this group is the composition:
\begin{eqnarray}
(0,1,...,d-1)\;\overset{\pi} \rightarrow\;(\pi(0),\pi(1),...,\pi(d-1))\overset{\wp} \rightarrow\;(\wp [\pi(0)],\wp[\pi(1)],...,\wp[\pi(d-1)]).
\end{eqnarray}
We will use the notation $(\wp \circ \pi)(i)=\wp[\pi(i)]$.
The unity is
\begin{eqnarray}
(0,1,...,d-1)\;\overset{\bf 1} \rightarrow\;(0,1,...,d-1).
\end{eqnarray}

A permutation matrix $M_\pi$ is a $d\times d$ matrix with elements 
\begin{eqnarray}\label{PP}
M_\pi(i,j)=\delta(\pi(i),j);\;\;\;{\rm rank} (M_\pi)=d.
\end{eqnarray}
where $\delta$ is the Kronecker delta. 
Each row and each column have one element equal to $1$ and the other $d-1$ elements equal to $0$.
Clearly
\begin{eqnarray}
M_\wp M_\pi=M_{\pi\circ \wp};\;\;\;M_{\bf 1}={\bf 1};\;\;\;M_{\pi^{-1}}=[M_\pi]^{-1}=[M_\pi]^T.
\end{eqnarray}
The matrices $M_\pi$ form a representation of the symmetric group ${\cal S}$\cite{SYM}.

\subsection{Permutations  with repetitions}

We enlarge the set of permutations by considering maps which might not be bijective. 
This is equivalent to choosing $d$ integers from ${\mathbb Z}_d$, allowing repetitions.
Let ${\cal F}$ be the set of all functions
\begin{eqnarray}
f:\;(0,...,d-1)\;\rightarrow\;(f(0),...,f(d-1));\;\;\;f(i)\in{\mathbb Z}_d,
\end{eqnarray}
from ${\mathbb Z}_d$ into itself.
An example of the function $f(i)$ is to consider a safe that opens with a sequence of $d$ integers in ${\mathbb Z}_d$ that has in the $i$-position the number $f(i)$.
This example is discussed further below.

There are $d^d$ functions in ${\cal F}$ (because $0$ can be mapped to any of the $d$ elements in ${\mathbb Z}_d$, $1$ can be mapped to any of the $d$ elements in ${\mathbb Z}_d$, etc).
From them $d!$ are permutations.
The set ${\cal F}$ with the composition 
\begin{eqnarray}
(f\circ g) (i)=f[g(i)]
\end{eqnarray}
as multiplication, is a semigroup (because the inverse does not exist in general).
In terms of the example with a safe, the composition represents a change in the sequence that opens the safe.

The function $f\in{\cal F}$ can be represented with the matrix $M_ f$ with elements  
\begin{eqnarray}\label{6}
M_f(i,j)=\delta(f(i),j);\;\;\;{\rm rank} (M_f)\le d.
\end{eqnarray}
This is a generalisation of Eq.(\ref{PP}), and includes functions which might not be permutations.
Each row has one element equal to $1$ and the other $d-1$ elements equal to $0$.
But a column might have many elements equal to one.
There are $d^d$ matrices $M_f$ and we refer to them as permutation with repetition matrices. $d!$ of them are permutation matrices.

We note that
\begin{eqnarray}
M_gM_f=M_{f\circ g},
\end{eqnarray}
where $f\circ g$ is the composition of the two functions.
Therefore the set of these matrices with matrix multiplication, is a semigroup isomorphic to ${\cal F}$, and we denote it as ${\cal F}$.

We also consider transpose of these matrices (which have the $(j,i)$ element equal to $\delta(f(i),j)$) that form another semigroup isomorphic to ${\cal F}$.
Here each column has one element equal to $1$ and the other $d-1$ elements equal to $0$.
But a row might have many elements equal to one.

\begin{lemma}\label{LL}
\begin{eqnarray}
\prod _{i=0}^{d-1}[M_fM_g^T](i,i)=\delta(f,g);\;\;\;M_f, M_g \in {\cal F}_R.
\end{eqnarray}
\end{lemma}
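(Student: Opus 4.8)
The plan is to compute the diagonal entry $[M_fM_g^T](i,i)$ directly from the definitions and then take the product over $i$. First I would expand the matrix product: using $M_f(i,j)=\delta(f(i),j)$ from Eq.~(\ref{6}) and $M_g^T(j,i)=M_g(i,j)=\delta(g(i),j)$, one gets
\[
[M_fM_g^T](i,i)=\sum_{j=0}^{d-1}M_f(i,j)\,M_g^T(j,i)=\sum_{j=0}^{d-1}\delta(f(i),j)\,\delta(g(i),j)=\delta(f(i),g(i)),
\]
since the only surviving term in the sum is $j=f(i)$, and it contributes exactly when $g(i)=f(i)$ as well. This step uses nothing more than the fact that each row of $M_f$ (equivalently, each column of $M_g^T$) has a single nonzero entry equal to $1$.

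Next I would substitute this into the product and use the elementary fact that a product of Kronecker deltas equals $1$ precisely when every factor equals $1$:
\[
\prod_{i=0}^{d-1}[M_fM_g^T](i,i)=\prod_{i=0}^{d-1}\delta(f(i),g(i))=
\begin{cases}1,& f(i)=g(i)\ \text{for all}\ i\in{\mathbb Z}_d,\\[2pt] 0,&\text{otherwise.}\end{cases}
\]
Finally, I would observe that the condition ``$f(i)=g(i)$ for all $i\in{\mathbb Z}_d$'' is, by definition, the statement that $f=g$ as elements of ${\cal F}$, so the right-hand side is exactly $\delta(f,g)$, which completes the argument.

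I expect the whole proof to be entirely routine; there is no genuine obstacle. The only point that deserves a line of care is the evaluation of the inner sum $\sum_{j}\delta(f(i),j)\,\delta(g(i),j)$, where one must note that the single nonzero term selects $j=f(i)$ and thereby reduces the sum to the comparison of $f(i)$ with $g(i)$. Everything else is bookkeeping with Kronecker deltas.
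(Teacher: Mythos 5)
Your proposal is correct and follows essentially the same route as the paper: both compute $[M_fM_g^T](i,i)=\sum_j\delta(f(i),j)\,\delta(g(i),j)$ and observe that the product over $i$ is $1$ exactly when $f(i)=g(i)$ for every $i$, i.e.\ when $f=g$. Your explicit intermediate identity $[M_fM_g^T](i,i)=\delta(f(i),g(i))$ is just a slightly more detailed rendering of the same argument.
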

\begin{proof}
The $(i,i)$-element of the matrix $M_fM_g^T$ is
\begin{eqnarray}
[M_fM_g^T](i,i)=\sum _j\delta(f(i),j)\delta(g(i),j)
\end{eqnarray}
If $f\ne g$ at least one of these $(i,i)$-elements is $0$, and then their product is $0$.
Only in the case $f=g$ all $(i,i)$-elements are $1$, and then their product is $1$.
\end{proof}
Below we will see that when the matrices $M_f$ (and the row Markov matrices which are more general) act on the right of probability vectors (written as rows) we get 
other probability vectors.

\section{Random permutations with repetitions: random safes}

\subsection{Markov matrices }

A $d\times d$ matrix $q$ is called row Markov matrix if 
\begin{eqnarray}\label{A7}
\sum _{j=0}^{d-1}q(i,j)=1;\;\;\;q(i,j)\in [0,1];\;\;\;i,j=0,...,d-1.
\end{eqnarray}
The elements of each  row are probabilities and we
interpret them in terms of a random safe as follows.
We consider a `large' ensemble of safes each of which opens with a sequence of $d$ integers which take values in ${\mathbb Z}_d$.
Let $q(i,j)$ be the probability that the number in the $i$-position of the opening sequence, is $j$. 
The matrix $q$ is a row Markov matrix, because the sum of probabilities that in the $i$-position is some  number in ${\mathbb Z}_d$, is $1$.
We call the index $i$ `position index', and the index $j$ `number index'.

We denote as ${\cal M}$ the set of row Markov matrices.
We note that
\begin{itemize}
\item
The product of two row Markov matrices is a  row Markov matrix.
\item
The inverse of a row Markov matrix might not exist, or if it exists it might not be a row Markov matrix.
\item
${\cal M}$ is a  semigroup with respect to matrix multiplication.

\item
If $q_1, q_2$ are row Markov matrices, then the $\lambda q_1+(1-\lambda) q_2$ where $0\le \lambda \le 1$ is a row Markov matrix.
The semigroup ${\cal M}$ of all $d\times d$ row Markov matrices is a convex polytope with the $d^d$ matrices $M_f$ (in the semigroup ${\cal F}$) as vertices.
This polytope is studied and used in this paper.

\item
A special case of row Markov matrices are the doubly stochastic matrices for which the following relation also holds:
\begin{eqnarray}
\sum _{i=0}^{d-1}q(i,j)=1.
\end{eqnarray}
They are intimately related to majorization\cite{MAJ}, which is a preorder that has been used in various areas, including quantum physics\cite{MA0,MA1,MA2,MA3}.

\item
The matrix ${\mathfrak U}$ with all elements
\begin{eqnarray}\label{UUU}
{\mathfrak U}_{ij}=\frac{1}{d};\;\;\;{\rm rank}({\mathfrak U})=1,
\end{eqnarray}
is a doubly stochastic matrix. 
For any row Markov matrix $q$ we get 
\begin{eqnarray}
q{\mathfrak U}={\mathfrak U},
\end{eqnarray}
${\mathfrak U}q$ is a row Markov matrix with all rows equal to each other, but in general ${\mathfrak U}q\ne{\mathfrak U}$.
\end{itemize}

Let ${\cal P}$ be the set of probability vectors  written as rows
\begin{eqnarray}\label{gt}
{\bf x}=(x(0),...,x(d-1));\;\;\;\sum _i x(i)=1;\;\;\;x(i)\ge 0.
\end{eqnarray}
If $q$ is a row Markov matrix then 
${\bf x}q$ is a probability vector, but $q{\bf x}^T$ might not be a probability vector.
Examples of probability vectors are the `most uncertain probability vector' ${\bf u}$, and the `certain probability vectors' ${\bf c}M_\pi$, where
\begin{eqnarray}\label{3D}
{\bf u}=\frac{1}{d}(1,...,1);\;\;\;{\bf c}=(1,0...,0).
\end{eqnarray}

A probability vector is called sparse (`almost certain') if most of its elements are zero or almost zero. 
Roughly speaking, sparse probability vectors are at the opposite end of the uncertain probability vector ${\bf u}$.
Below we make this precise with the preorder majorization, and also with the Gini index.

The Birkhoff-von Neumann theorem states that every doubly stochastic matrix   ${\mathfrak D}$ can be expanded in terms of the $d!$ permutation matrices as
\begin{eqnarray}\label{hh}
{\mathfrak D}=\sum _{\pi}\lambda _{\mathfrak D}({\pi})M_{\pi};\;\;\;\sum _{\pi} \lambda _{\mathfrak D}({\pi})=1;\;\;\; \lambda _{\mathfrak D}({\pi})\ge 0;\;\;\;\pi \in {\cal S}.
\end{eqnarray}
Here $\lambda _{\mathfrak D}({\pi})$ are probabilities. 
So a doubly stochastic matrix can be viewed as a random permutation (without repetitions).
In this paper we introduce expansions analogous to this for row Markov matrices, that involve permutations with repetitions.
They can be viewed as random permutations with repetitions.
Later we go one step further, and introduce quantum  permutations with repetitions, that involve superpositions.

\section{Expansions for row Markov matrices and their interpretation in terms of random safes}

In this section we introduce expansions for row Markov matrices, in terms of matrices related to permutations with repetitions of $d$ integers in ${\mathbb Z}_d$.
These expansions are interpreted in terms of the sequences of integers that open random safes described by the Markov matrices.

The  expansion is not unique, and it depends on the correlations between the integers in the sequence that opens the random safe.
 We first assume absence of correlations and then joint probabilities are equal to products of probabilities.
 We show that the coefficients in the expansion are scalar products of row Markov matrices, a concept which is defined and plays an important role in this paper.
Later we consider the case of correlations.

The Markov matrix $q(i,j)$ describes the probability that in the position $i$ of the sequence that opens the random safe, is the integer $j$.
It does not give any information about joint probabilities and correlations. 
We can have two different ensembles described by the same Markov matrix $q(i,j)$, one without any correlations between the integers in the sequence that opens the random safe, and the other with correlations.
In this case we have two different expansions of the same row Markov matrix.

\subsection{Expansion for row Markov matrices in the absence of correlations}\label{uncor}
 In this subsection we assume independence (lack of correlations) between the integers in the various positions of the sequences that open the safes.
 We consider two ensembles of safes (which are independent of each other), described by the row Markov matrices $q$, $p$.
 We take randomly one safe from each ensemble, and then
the product $q(i,j)p(i,j)$ is the joint probability that in both safes the number in the $i$-position of the opening sequence is $j$.
Summation over the `number index' $j$, and multiplication over the `position index' $i$ gives
 the probability that a safe from the first ensemble 
has the same opening sequence, as  the safe from the second ensemble. 
This motivates the following definition of the scalar product.
 
 \begin{definition}
The scalar product of two row Markov matrices $q,p$ is the product of the diagonal elements of the matrix $qp^T$:
\begin{eqnarray}\label{17}
(q,p)=\prod _{i=0}^{d-1}\left[\sum_{j=0}^{d-1}q(i,j)p(i,j)\right];\;\;\;q,p\in{\cal M}.
\end{eqnarray}
\end{definition}
We note that in many cases a scalar product is a bilinear function, but this is not the case here. Also
\begin{eqnarray}
\prod _{i=0}^{d-1}\left[\sum_{j=0}^{d-1}q(i,j)p(i,j)\right]\ne \sum _{i=0}^{d-1}\left[\prod_{j=0}^{d-1}q(i,j)p(i,j)\right].
\end{eqnarray}
The first expression has $d^d$ products, while the second one has only $d$ products.

It is easily seen that 
\begin{eqnarray}
(q,p)=(p,q);\;\;\;(q, {\mathfrak U})=\frac{1}{d^d};\;\;\;(q,{\bf 1})=\prod _{i=0}^{d-1}q(i,i).
\end{eqnarray}
Also
\begin{eqnarray}\label{46}
(q,q)=\prod _{i=0}^{d-1}\left[\sum_{j=0}^{d-1}q(i,j)^2\right].
\end{eqnarray}
If we take two safes from an ensemble described by the row Markov matrix $q$, then
$(q,q)$ is the probability that they will have the same opening sequence.
If $q(i,j)=\delta(j,n_i)$ (in which case the sequence that opens all safes  is $(n_0,...,n_{d-1})$), then $(q,q)=1$.
$(q,q)$ takes small values when the distributions $q(i,j)$ are close to the uniform distribution, for all $i$.
In particular $({\mathfrak U}, {\mathfrak U})=\frac{1}{d^d}$.

We can write the result in lemma \ref{LL}  as
\begin{eqnarray}\label{RRR}
(M_f,M_g)=\delta(f,g);\;\;\;f,g\in {\cal F}.
\end{eqnarray} 

\begin{lemma}
For row Markov matrices the following relations hold:
\begin{itemize}
\item[(1)]
\begin{eqnarray}
0\le (q,p)\le 1.
\end{eqnarray}
\item[(2)]
\begin{eqnarray}\label{21}
(\lambda q_1+(1-\lambda)q_2, p)\ge\lambda ^d(q_1, p)+(1-\lambda )^d(q_2, p);\;\;\;0\le \lambda \le 1.
\end{eqnarray}

\end{itemize}
\end{lemma}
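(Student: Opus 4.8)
For part (1), I would start from the defining formula~(\ref{17}). Each bracketed factor $\sum_j q(i,j)p(i,j)$ is a sum of products of numbers in $[0,1]$, hence nonnegative, which gives $(q,p)\ge 0$. For the upper bound, I would bound each factor individually: since $p(i,j)\le 1$ for all $j$, we have $\sum_j q(i,j)p(i,j)\le \sum_j q(i,j)=1$ by the row-Markov condition~(\ref{A7}). A product of $d$ factors each lying in $[0,1]$ lies in $[0,1]$, so $(q,p)\le 1$. This part is essentially immediate and I do not expect any obstacle.

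For part (2), write $q_\lambda=\lambda q_1+(1-\lambda)q_2$ and set, for each position index $i$,
\begin{eqnarray}
a_i=\sum_{j}q_1(i,j)p(i,j),\qquad b_i=\sum_{j}q_2(i,j)p(i,j),
\end{eqnarray}
so that $a_i,b_i\in[0,1]$ and, by linearity of the inner sum, the $i$-th factor of $(q_\lambda,p)$ equals $\lambda a_i+(1-\lambda)b_i$. Thus the claim reduces to the pointwise-product inequality
\begin{eqnarray}
\prod_{i=0}^{d-1}\bigl[\lambda a_i+(1-\lambda)b_i\bigr]\;\ge\;\lambda^d\prod_{i=0}^{d-1}a_i+(1-\lambda)^d\prod_{i=0}^{d-1}b_i .
\end{eqnarray}
The plan is to prove this by expanding the left-hand product and discarding all but two of the $2^d$ terms: among the monomials obtained, the all-$\lambda a_i$ term contributes exactly $\lambda^d\prod_i a_i$, the all-$(1-\lambda)b_i$ term contributes exactly $(1-\lambda)^d\prod_i b_i$, and every remaining term is a product of nonnegative quantities (since $a_i,b_i\ge0$ and $\lambda,1-\lambda\ge0$). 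Dropping those nonnegative terms yields the inequality. As an alternative route I could induct on $d$: the base case $d=1$ is the identity $\lambda a_0+(1-\lambda)b_0=\lambda a_0+(1-\lambda)b_0$, and the inductive step multiplies the $d$-factor inequality by the nonnegative factor $\lambda a_{d-1}+(1-\lambda)b_{d-1}$ and again discards the nonnegative cross term $\lambda^d(\prod_{i<d-1}a_i)(1-\lambda)b_{d-1}+(1-\lambda)^d(\prod_{i<d-1}b_i)\lambda a_{d-1}$.

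The only thing that needs a word of care — and the closest thing to an ``obstacle'' — is making sure the two retained monomials really are $\lambda^d\prod a_i$ and $(1-\lambda)^d\prod b_i$ and that the discarded terms are genuinely nonnegative rather than merely real; this is where the hypotheses $q_1,q_2,p\in\mathcal M$ (ensuring $a_i,b_i\in[0,1]\subset[0,\infty)$) and $0\le\lambda\le1$ are used. Equality holds precisely when every discarded cross term vanishes, e.g.\ when $d=1$, or when $a_i=0$ for some $i$ and $b_i=0$ for some $i$; I may remark on this but it is not required for the statement.
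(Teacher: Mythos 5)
Your proof is correct, and part (2) follows essentially the same route as the paper: write the $i$-th factor of $(\lambda q_1+(1-\lambda)q_2,\,p)$ as $\lambda a_i+(1-\lambda)b_i$, expand the product, keep the all-$q_1$ and all-$q_2$ monomials, and discard the remaining nonnegative cross terms. For part (1) you bound each factor via $\sum_j q(i,j)p(i,j)\le\sum_j q(i,j)=1$ using $p(i,j)\le 1$, whereas the paper uses $q(i,j)p(i,j)\le\frac{1}{2}\bigl(q(i,j)^2+p(i,j)^2\bigr)$ together with $\sum_j q(i,j)^2\le 1$; both are valid elementary bounds and yours is marginally more direct.
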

\begin{proof}
\begin{itemize}
\item[(1)]
We get
\begin{eqnarray}
\prod _{i=0}^{d-1}\left [\sum_{j_i=0}^{d-1}q(i,j)p(i,j)\right ]\le \prod _{i=0}^{d-1}\left [\frac{1}{2}\sum_{j_i=0}^{d-1}\left (q(i,j)^2+p(i,j)^2\right )\right ]\le 1
\end{eqnarray}
We used here the fact that
\begin{eqnarray}
\sum_{j=0}^{d-1}q(i,j)=1\;\;\rightarrow\;\;\sum_{j=0}^{d-1}q(i,j)^2\le 1,
\end{eqnarray}
and similarly for $p(i,j)$.
\item[(2)]
We get
\begin{eqnarray}
[\lambda q_1+(1-\lambda)q_2] p^T=\lambda (q_1p^T)+(1-\lambda )(q_2p^T).
\end{eqnarray}
When we get the product of the diagonal elements of the matrix in the left hand side, we get
the products of the diagonal elements of the two matrices in the right hand side, plus non-negative `cross terms' that contain the factor $\lambda ^n(1-\lambda)^{d-n}$
times $n$ $q_1$-variables times $d-n$ $q_2$-variables.
This proves the inequality.

\end{itemize}
\end{proof}
\begin{lemma}\label{L25}
If $q$ is a $d\times d$ matrix and $f\in {\cal F}$, then
\begin{eqnarray}\label{gh1}
\sum _f\prod _{i=0}^{d-1} q(i,f(i))=\prod _{i=0}^{d-1} \left [\sum _{j=0}^{d-1} q(i,j)\right ].
\end{eqnarray}
\end{lemma}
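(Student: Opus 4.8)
The plan is to prove the identity by recognizing that the left-hand side is nothing but the fully expanded form of the product on the right-hand side. Concretely, I would expand $\prod_{i=0}^{d-1}\left[\sum_{j=0}^{d-1}q(i,j)\right]$ using the distributive law: choosing one term from each of the $d$ factors amounts to selecting, for each position index $i\in\{0,\dots,d-1\}$, a value $j_i\in\{0,\dots,d-1\}$, and the corresponding summand is $\prod_{i=0}^{d-1}q(i,j_i)$. The total sum runs over all tuples $(j_0,\dots,j_{d-1})\in{\mathbb Z}_d^d$.

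The key step is then the observation that a tuple $(j_0,\dots,j_{d-1})\in{\mathbb Z}_d^d$ is exactly the same datum as a function $f\in{\cal F}$, via the correspondence $f(i)=j_i$. There are $d^d$ such tuples and $d^d$ such functions, and the bijection is immediate. Under this identification the summand $\prod_{i=0}^{d-1}q(i,j_i)$ becomes $\prod_{i=0}^{d-1}q(i,f(i))$, so
\begin{eqnarray}
\prod_{i=0}^{d-1}\left[\sum_{j=0}^{d-1}q(i,j)\right]=\sum_{(j_0,\dots,j_{d-1})\in{\mathbb Z}_d^d}\prod_{i=0}^{d-1}q(i,j_i)=\sum_f\prod_{i=0}^{d-1}q(i,f(i)),
\end{eqnarray}
which is the claimed identity. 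Note that no hypothesis on $q$ beyond being a $d\times d$ matrix is needed (and none is assumed in the statement); the identity is purely formal/combinatorial.

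There is essentially no obstacle here: the only thing to be careful about is making the bookkeeping of the multi-index expansion precise, i.e.\ justifying the generalized distributive law $\prod_i\sum_j a_{ij}=\sum_{f}\prod_i a_{i,f(i)}$, which one can do by a trivial induction on $d$ (the base case $d=1$ is the identity $\sum_j a_{0j}=\sum_j a_{0j}$, and the inductive step peels off one factor $\sum_{j}q(d-1,j)$ and distributes it over the sum already obtained for the first $d-1$ factors). I would present the argument directly via this induction, or simply invoke the distributive law, since the correspondence between tuples and elements of ${\cal F}$ is the conceptual content.
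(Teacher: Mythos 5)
Your proof is correct and follows essentially the same route as the paper: both arguments identify the $d^d$ terms obtained by expanding the product on the right with the $d^d$ functions $f\in{\cal F}$ indexing the sum on the left, via the correspondence $f(i)=j_i$. Your version is a slightly more explicit packaging (distributive law plus the tuple--function bijection, with an optional induction), but the content is the same.
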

\begin{proof}
We first point out that there is the same number of $d^d$ terms in each side of this equality.
For each term $q(0,f(0))...q(d-1,f(d-1))$ on the left hand side, we find the same term on the right hand side written as $q(0,j_0)...q(d-1,j_{d-1})$
with $j_0=f(0),...,j_{d-1}=f(d-1)$.
The fact that $f$ is a function (as opposed to a general binary relation) implies that there is only one term in the right hand side equal to $q(0,f(0))...q(d-1,f(d-1))$.
\end{proof}

If $q$ is a row Markov matrix and $f\in {\cal F}$, we define the product of probabilities 
\begin{eqnarray}\label{gh1}
{\mathfrak M}_q(f)=(q,M_f)=\prod _{i=0}^{d-1} q(i,f(i)).
\end{eqnarray}
In the absence of correlations, ${\mathfrak M}_q(f)$ is the joint probability that a random safe will open with the sequence $(f(0),...,f(d-1))$.
Indeed using lemma \ref{L25} and the fact that $q$ is here a row Markov matrix we prove that
\begin{eqnarray}\label{pro}
\sum _{f\in {\cal F}_R}{\mathfrak M}_q(f)=1.
\end{eqnarray}

The following proposition discusses an expansion of any row Markov matrix $q$ in terms of the matrices $M_f$ with the ${\mathfrak M}_q(f)$ as coefficients.
\begin{proposition}
\begin{itemize}
\item[(1)]
A row Markov matrix $q$ can be expanded in terms of the permutation with repetition matrices $M_f$, as
\begin{eqnarray}\label{gh}
q=\sum _{f\in {\cal F}}{\mathfrak M}_q(f)M_f=\sum _{f\in {\cal F}}(q,M_f)M_f.
\end{eqnarray}
\item[(2)]
For the matrix  ${\mathfrak U}$ all the ${\mathfrak M}_{\mathfrak U}(f)$ are equal to each other, and the expansion in Eq.(\ref{gh}) gives
\begin{eqnarray}
{\mathfrak U}=\frac{1}{d^d}\sum _{f \in {\cal F}_R}M_f.
\end{eqnarray}
\item[(3)]
For  $q=M_g$ we get ${\mathfrak M}_q(f)=\delta (f,g)$.
In this case the expansion in Eq.(\ref{gh}) has only one term.
\end{itemize}
\end{proposition}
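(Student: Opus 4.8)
The plan is to verify the three parts of the proposition by direct computation, using the definition of $M_f$ in Eq.~(\ref{6}) together with Lemma~\ref{L25} and the normalisation of row Markov matrices.

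For part~(1), I would compute the $(i,j)$-element of the right-hand side of Eq.~(\ref{gh}). Since $M_f(i,j)=\delta(f(i),j)$ and ${\mathfrak M}_q(f)=\prod_{k=0}^{d-1}q(k,f(k))$, summing over all $f\in{\cal F}$ gives
\begin{eqnarray}
\sum_{f\in{\cal F}}{\mathfrak M}_q(f)M_f(i,j)=\sum_{f\in{\cal F}}\left[\prod_{k=0}^{d-1}q(k,f(k))\right]\delta(f(i),j).
\end{eqnarray}
The constraint $f(i)=j$ fixes the value of $f$ at the single point $i$, so the remaining sum factors over the other $d-1$ positions; applying Lemma~\ref{L25} to those positions and using $\sum_{m}q(k,m)=1$ collapses each of those factors to $1$, leaving exactly $q(i,j)$. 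This proves the expansion. The second equality in Eq.~(\ref{gh}) is then immediate from the identity ${\mathfrak M}_q(f)=(q,M_f)$ established in Eq.~(\ref{gh1}).

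Part~(2) follows by specialising $q={\mathfrak U}$: here $q(i,f(i))=1/d$ for every $i$ and every $f$, so ${\mathfrak M}_{\mathfrak U}(f)=(1/d)^d=1/d^d$ independently of $f$, and Eq.~(\ref{gh}) reduces to ${\mathfrak U}=\frac{1}{d^d}\sum_{f\in{\cal F}}M_f$. Part~(3) follows from Eq.~(\ref{RRR}) (equivalently Lemma~\ref{LL}): for $q=M_g$ we have ${\mathfrak M}_{M_g}(f)=(M_g,M_f)=\delta(f,g)$, so every coefficient vanishes except the one at $f=g$, and Eq.~(\ref{gh}) becomes the tautology $M_g=M_g$.

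There is no real obstacle here — the only point requiring a little care is the bookkeeping in part~(1): one must be sure that fixing $f(i)=j$ genuinely decouples the sum over the remaining coordinates so that Lemma~\ref{L25} applies position-by-position, and that the surviving factor is precisely $q(i,j)$ rather than something summed. Once the index manipulation is written out cleanly, parts (2) and (3) are immediate corollaries.
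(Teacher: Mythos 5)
Your proof is correct and follows essentially the same route as the paper: computing the $(i,j)$-element of $\sum_f {\mathfrak M}_q(f)M_f$, observing that the constraint $f(i)=j$ decouples the sum over the remaining positions so that each factors into $\sum_k q(k',k)=1$ (the paper phrases this as an argument "similar to the proof of Lemma~\ref{L25}" rather than a direct invocation of it, but the computation is identical), and then reading off parts (2) and (3) as immediate specialisations. No gaps.
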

\begin{proof}
\begin{itemize}
\item[(1)]
The proof is similar to the proof of lemma \ref{L25}.
Using  ${\mathfrak M}_q(f)=q(0,f(0))...q(d-1,f(d-1))$ we calculate the $(i,j)$ element of the matrix in the right hand side of Eq.(\ref{gh}):
\begin{eqnarray}
&&\sum _{f\in {\cal F}} \left [q(0,f(0))...q(d-1,f(d-1))\right ]\delta(f(i),j)\nonumber\\&&=
\left (\sum _k q(0,k)\right )...\left (\sum _k q(i-1,k)\right )q(i,j)\left (\sum _k q(i+1,k)\right )...\left (\sum _k q(d-1,k)\right )=q(i,j).
\end{eqnarray}
Here we have the sum over all functions such that $f(i)=j$. So for all $k\ne i$ the $f(k)$ takes all possible values.
This proves Eq.(\ref{gh}).

\item[(2)]
In the special case that $q={\mathfrak U}$ we easily see that
\begin{eqnarray}
{\mathfrak M}_{\mathfrak U}({f})=\frac{1}{d^d}.
\end{eqnarray}
\item[(3)]
The proof of this is straightforward.
\end{itemize}
\end{proof}

The expansion in Eq.(\ref{gh}) says that in the ensemble of safes with probabilities described by the matrix $q$,
the probability that the opening sequence is $(f(0),...,f(d-1))$
(i.e., the percentage of safes with opening sequence $(f(0),...,f(d-1))$) is ${\mathfrak M}_q(f)$.
Special cases are:
\begin{itemize}
\item
If $q=M_f$, then all safes have the same opening sequence $(f(0),...,f(d-1))$.
\item
If $q={\mathfrak U}$, all opening sequences are equally probable (with probability $\frac{1}{d^d}$).
\end{itemize}

\begin{cor}
For row Markov matrices:
\begin{itemize}
\item[(1)]
\begin{eqnarray}\label{bb}
{\mathfrak M}_{\lambda q_1+(1-\lambda)q_2}(f)\ge \lambda ^d{{\mathfrak M}_{q_1}}(f)+(1-\lambda)^d{{\mathfrak M}_{q_2}}(f)
\end{eqnarray}
\item[(2)]
If 
\begin{eqnarray}
q=\sum _{f\in {\cal F}}{\mathfrak M}_q(f)M_f;\;\;\;p=\sum _{f\in {\cal F}}{\mathfrak M}_p(f)M_f
\end{eqnarray}
then
\begin{eqnarray}\label{579}
(q, p)=\sum _{f\in {\cal F}}{\mathfrak M}_q(f){\mathfrak M}_p(f).
\end{eqnarray}

\end{itemize}
\end{cor}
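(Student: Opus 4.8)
The plan is to prove the two parts separately, both by reducing to identities already established in the excerpt. For part (1), I would start from the defining formula $\mathfrak{M}_q(f)=\prod_{i=0}^{d-1}q(i,f(i))$ of Eq.(\ref{gh1}), apply it to the matrix $\lambda q_1+(1-\lambda)q_2$, and expand the product $\prod_{i=0}^{d-1}[\lambda q_1(i,f(i))+(1-\lambda)q_2(i,f(i))]$ over the $2^d$ ways of choosing, for each position $i$, either the $q_1$-term or the $q_2$-term. The two extreme choices (all $q_1$ or all $q_2$) give exactly $\lambda^d\mathfrak{M}_{q_1}(f)+(1-\lambda)^d\mathfrak{M}_{q_2}(f)$, and every remaining term is a product of a nonnegative power of $\lambda$, a nonnegative power of $(1-\lambda)$, and nonnegative matrix entries, hence $\ge 0$. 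This is the same bookkeeping argument used in the proof of Eq.(\ref{21}); indeed part (1) is essentially Eq.(\ref{21}) specialised to $p=M_f$, since $(q,M_f)=\mathfrak{M}_q(f)$, so I could alternatively just cite that lemma.

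For part (2), I would compute $(q,p)$ directly from Definition Eq.(\ref{17}), namely $(q,p)=\prod_{i=0}^{d-1}\bigl[\sum_{j=0}^{d-1}q(i,j)p(i,j)\bigr]$. Expanding this product over positions, a single term in the expansion corresponds to choosing, for each $i$, an index $j_i$, i.e.\ to choosing a function $f\in\mathcal{F}$ with $f(i)=j_i$; that term equals $\prod_{i=0}^{d-1}q(i,f(i))p(i,f(i))=\mathfrak{M}_q(f)\mathfrak{M}_p(f)$. Since the map from tuples $(j_0,\dots,j_{d-1})$ to functions $f$ is a bijection (this is exactly the counting observation in the proof of Lemma \ref{L25}), summing over all $f\in\mathcal{F}$ collects every term exactly once, giving $(q,p)=\sum_{f\in\mathcal{F}}\mathfrak{M}_q(f)\mathfrak{M}_p(f)$. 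Note that the hypothesis that $q$ and $p$ have the stated expansions is automatically satisfied by every row Markov matrix by Eq.(\ref{gh}), so it carries no extra force; the identity is really just the distributive law applied to Eq.(\ref{17}), combined with Eq.(\ref{gh1}).

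There is essentially no hard step here: both parts are elementary manipulations of finite products and sums, and the only thing to be careful about is the combinatorial indexing — making sure that expanding $\prod_i(\sum_j\cdots)$ produces exactly one term per function $f\in\mathcal{F}$ and no double-counting. If one wanted a slicker proof of part (2), one could instead substitute the expansion $p=\sum_g\mathfrak{M}_p(g)M_g$ into $(q,p)$ and try to use bilinearity, but the remark after Definition Eq.(\ref{17}) warns that $(\cdot,\cdot)$ is \emph{not} bilinear, so that route does not work directly and the honest computation from Eq.(\ref{17}) is the right one. I would therefore present part (2) as the distributive-law expansion and invoke the bijection from the proof of Lemma \ref{L25}, and present part (1) either as a one-line specialisation of Eq.(\ref{21}) or as the same $2^d$-term expansion argument.
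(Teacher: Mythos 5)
Your proposal is correct and follows essentially the same route as the paper: part (1) is obtained by specialising Eq.(\ref{21}) to the case where one argument is $M_f$ together with $(q,M_f)={\mathfrak M}_q(f)$ (your direct $2^d$-term expansion is just the proof of Eq.(\ref{21}) unwound in this special case), and part (2) is exactly Lemma \ref{L25} applied to the entrywise product $q(i,j)p(i,j)$. Your remark that bilinearity cannot be invoked is also consistent with the paper's caveat following the definition of the scalar product.
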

\begin{proof}
\begin{itemize}
\item[(1)]
We insert $q=M_f$ in Eq.(\ref{21}), and use Eq.(\ref{gh1}).
\item[(2)]
Using lemma \ref{L25} we get:
\begin{eqnarray}
(q,p)=\prod _{i=0}^{d-1}\left[\sum_{j=0}^{d-1}q(i,j)p(i,j)\right]=\sum _f\prod _i q(i,f(i))p(i,f(i))=\sum _{f}{\mathfrak M}_q(f){\mathfrak M}_p(f)
\end{eqnarray}
\end{itemize}

\end{proof}
The $\lambda q_1+(1-\lambda)q_2$ with $0\le \lambda \le 1$,
describes the merger of an ensemble described by $q_1$ which has $\lambda N$ safes, with 
an ensemble described by $q_2$ which has $(1-\lambda) N$ safes.
In this case the inequality in Eq.(\ref{bb}) holds.

\subsection{Expansions for row Markov matrices in the presence of correlations}

We consider an ensemble of many safes with correlations between the integers in the various positions of the sequences that open the safes, i.e., lack of independence.
Let $q(f)=q[f(0),...,f(d-1)]$ be $d^d$ joint probabilities (labelled with the functions $f\in {\cal F}$). 
In terms of a random safe, they are the joint probabilities that the number in the $0$-position of the opening sequence is $f(0)$,
the number in the $1$-position is $f(1)$, etc.
We call them Markov tensors and 
\begin{eqnarray}\label{42}
\sum _{f\in {\cal F}}q(f)=1.
\end{eqnarray}
The
\begin{eqnarray}\label{ghd}
q=\sum _{f\in {\cal F}}q(f)M_f,
\end{eqnarray}
is a row Markov matrix. Indeed
\begin{eqnarray}
\sum _jq(i,j)=\sum _j\sum _{f\in {\cal F}}q(f)M_f(i,j)=\sum _{f\in {\cal F}}q(f)=1.
\end{eqnarray}

Eq.(\ref{ghd}) provides an expansion of $q$ in terms of $M_f$ with the joint probabilities $q(f)$ as coefficients.
Eq.(\ref{gh}) provides an alternative  expansion of the same row Markov matrix $q$ in terms of $M_f$ with the products of probabilities ${\mathfrak M}_q(f)$ as coefficients,
and in general $q(f)\ne {\mathfrak M}_q(f)$.
This is because in the presence of correlations joint probabilities are not equal to the product of probabilities.
So the non-uniqueness of the expansion is related to various types of correlations.
Two ensembles might have different Markov tensors $q(f)$ (i.e. different correlations) but the same Markov matrix $q(i,j)$, and then we get two different expansions of the same Markov matrix.

The $d^d$ quantities
\begin{eqnarray}\label{COR}
{\cal C}_q(f)=q(f)-{\mathfrak M}_q(f)=q(f)-\prod _i q(i,f(i));\;\;\;-1\le {\cal C}_q(f)\le 1;\;\;\;\sum _{f\in {\cal F}_R}{\cal C}_q(f)=0,
\end{eqnarray}
is one possible way of quantifying correlations, and we call them correlation coefficients.
In the case of independence (absence of correlations) we get  ${\cal C}(f)=0$.

\begin{example}\label{ex23}
We consider the following $3\times 3$ row Markov matrix 
\begin{eqnarray}\label{AB1}
q=\begin{pmatrix}
a&1-a&0\\
0&a&1-a\\
0&1-b&b\\
\end{pmatrix}.
\end{eqnarray}
Assuming absence of correlations, we present in table \ref{t1} the probabilities ${\mathfrak M}_q(f)$ and the corresponding permutation with repetition matrices $M_f$ for the expansion in Eq.(\ref{gh}).
In general there are $d^d=27$ terms, but in this example only $8$ of them have non-zero probability.

In the interpretation in terms of safes, the opening sequence consists of $3$ integers which take one of the values $0,1,2$.
The probabilities that the first integer is $0,1,2$ are $a, 1-a, 0$ correspondingly, etc.
We assume independence between the integers in the various positions of the opening sequence.
Then the joint probability that the opening sequence is $(0,1,1)$ is $a^2(1-b)$,
the joint probability that the opening sequence is $(0,1,2)$ is $a^2b$, etc.

Using Eq.(\ref{46}) we also calculated the 
\begin{eqnarray}
(q,q)=(2a^2-2a+1)^2(2b^2-2b+1).
\end{eqnarray}
This is the probability that two safes will have the same opening sequence of integers.
\end{example}

\begin{example}\label{ex24}
We consider again the row Markov matrix in Eq.(\ref{AB1}) and for later use we assume that
\begin{eqnarray}\label{AB11}
0\le 2a\le  b\le \frac{1}{2}.
\end{eqnarray}
A different expansion for this Markov matrix is
\begin{eqnarray}\label{kk}
q=a\begin{pmatrix}
1&0&0\\
0&1&0\\
0&0&1\\
\end{pmatrix}+(b-a)
\begin{pmatrix}
0&1&0\\
0&0&1\\
0&0&1\\
\end{pmatrix}+(1-b)
\begin{pmatrix}
0&1&0\\
0&0&1\\
0&1&0\\
\end{pmatrix}.
\end{eqnarray}
The corresponding joint probabilities $q(f)$ and correlation coefficients ${\cal C}_q(f)=q(f)-{\mathfrak M}_q(f)$ are shown in table \ref{t1}.
Here we have correlations between the integers in the various positions of the opening sequence, and the joint probability is not equal to the product of probabilities.
In an ensemble of these safes, the percentage of safes with opening sequence $(0,1,2)$ is $a$,
the percentage of safes with opening sequence $(1,2,2)$ is $b-a$, and the percentage of safes with opening sequence $(1,2,1)$ is $1-b$.

\end{example}

\section{Lorenz values of probability vectors}

Lorenz values and the Gini index\cite{Gini,Gini1,Gini2,Gini3} are statistical quantities which have been used extensively in Mathematical Economics for the study of inequality in wealth distribution.
In our context they are used to quantify the sparsity (certainty) in a probability vector.

\subsection{Lorenz values of probability vectors}\label{majo}

 Let $\pi$ be the permutation that orders the probabilities of a probability vector ${\bf x}$, in ascending order: 
\begin{eqnarray}\label{pp}
x[\pi(0)]\le x[\pi(1)]\le ...\le x[\pi(d-1)].
\end{eqnarray}
We sometimes use the notation $\pi_{\bf x}$ for this permutation.

The Lorenz values of this probability vector are defined as
\begin{eqnarray}
{\cal L}(\ell;{\bf x})=x[\pi(0)]+...+ x[\pi(\ell)];\;\;\;{\cal L}(d-1;{\bf x})=1
\end{eqnarray}
where $\ell=0,...,d-1$.
The Lorenz values are cumulative probabilities with respect to the order $\pi$ in Eq.(\ref{pp}).

The Lorenz values for the `most uncertain probability vector' ${\bf u}$ and the `certain probability vector' ${\bf c}$, are
\begin{eqnarray}
{\cal L}(\ell;{\bf u})=\frac{\ell+1}{d};\;\;\;{\cal L}(\ell;{\bf c})=\delta(\ell, d-1),
\end{eqnarray}
where $\delta$ is the Kronecker delta.
It is easily seen that for any permutation matrix $M_{\pi}$
\begin{eqnarray}\label{hh1}
{\cal L}(\ell;{\bf x}M_\pi)={\cal L}(\ell;{\bf x})
\end{eqnarray}
But for a general permutation with repetition matrix $M_f$ the ${\cal L}(\ell;{\bf x}M_f)$ might be different from ${\cal L}(\ell;{\bf x})$.
\begin{proposition}\label{L1}
\begin{eqnarray}\label{24}
0\le {\cal L}(\ell;{\bf x})\le \frac{\ell+1}{d}.
\end{eqnarray}
\end{proposition}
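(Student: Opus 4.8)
The plan is to handle the two inequalities separately. The lower bound $\mathcal{L}(\ell;\mathbf{x})\ge 0$ is immediate: by definition $\mathcal{L}(\ell;\mathbf{x})=x[\pi(0)]+\cdots+x[\pi(\ell)]$ is a sum of entries of a probability vector, each nonnegative by Eq.(\ref{gt}).

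For the upper bound the idea is a simple averaging argument: the $\ell+1$ smallest entries of $\mathbf{x}$ cannot sum to more than $\frac{\ell+1}{d}$ because their average is at most the overall average $\frac1d$. Concretely, I would write $S=\mathcal{L}(\ell;\mathbf{x})=\sum_{k=0}^{\ell}x[\pi(k)]$ and split into two cases. If $\ell=d-1$ then $S=\mathcal{L}(d-1;\mathbf{x})=1=\frac{d}{d}$ by the normalization already recorded, and the bound holds with equality. If $\ell\le d-2$, the ordering in Eq.(\ref{pp}) gives $x[\pi(k)]\le x[\pi(j)]$ whenever $k\le \ell<j$, so each of the $\ell+1$ small entries is bounded by the average $\frac{1}{d-\ell-1}\sum_{j=\ell+1}^{d-1}x[\pi(j)]$ of the $d-\ell-1$ remaining entries. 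Summing over $k=0,\dots,\ell$ and using $\sum_{j=\ell+1}^{d-1}x[\pi(j)]=1-S$ yields $S(d-\ell-1)\le(\ell+1)(1-S)$, i.e. $Sd\le \ell+1$, which is exactly $\mathcal{L}(\ell;\mathbf{x})\le\frac{\ell+1}{d}$.

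Equivalently one can double-count: $(d-\ell-1)S=\sum_{k=0}^{\ell}\sum_{j=\ell+1}^{d-1}x[\pi(k)]\le\sum_{k=0}^{\ell}\sum_{j=\ell+1}^{d-1}x[\pi(j)]=(\ell+1)(1-S)$, where the inequality is term by term from the ascending order of Eq.(\ref{pp}); this gives the same conclusion. Either route is short, and there is no genuine obstacle; the only point requiring care is the degenerate case $\ell=d-1$ (and $d=1$), where the "remaining" block is empty and the averaging step is vacuous, so it must be dispatched directly via the normalization $\mathcal{L}(d-1;\mathbf{x})=1$. One could alternatively phrase the statement as the assertion that the uniform vector $\mathbf{u}$ is majorized by every probability vector, recovering $\mathcal{L}(\ell;\mathbf{x})\le\mathcal{L}(\ell;\mathbf{u})=\frac{\ell+1}{d}$, but since majorization is only being introduced in this very section I would prefer the self-contained averaging proof above.
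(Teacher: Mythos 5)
Your proof is correct, but it takes a different (and slightly cleaner) route than the paper's. The paper proves the upper bound by a case split on the pivot value $x[\pi(\ell)]$: if $x[\pi(\ell)]\le \frac{1}{d}$ then each of the $\ell+1$ summands is at most $\frac{1}{d}$ and the bound is immediate; if $x[\pi(\ell)]>\frac{1}{d}$ it uses the fact that the $d-1-\ell$ entries above position $\ell$ each exceed $\frac{1}{d}$ to write
\begin{equation*}
1\ge {\cal L}(\ell;{\bf x})+(d-1-\ell)\,x[\pi(\ell)]\ge {\cal L}(\ell;{\bf x})+\frac{d-1-\ell}{d},
\end{equation*}
and rearranges. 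You instead compare the small block to the large block directly, obtaining $(d-\ell-1)S\le(\ell+1)(1-S)$ by double counting, which gives $Sd\le \ell+1$ in one stroke with no case distinction on where $x[\pi(\ell)]$ sits relative to $\frac{1}{d}$. The trade-off is minor: the paper's argument needs two cases but treats $\ell=d-1$ uniformly, while your averaging formulation needs the degenerate case $\ell=d-1$ dispatched separately (though, as you note, the double-counting form $(d-\ell-1)S\le(\ell+1)(1-S)$ actually still yields $S\le 1$ there, so even that special-casing is avoidable). Both arguments are elementary and self-contained; your explicit treatment of the lower bound, which the paper omits as obvious, is a small completeness gain.
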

\begin{proof}
If $x[\pi(\ell)]\le \frac{1}{d}$, then since $x[\pi(k)]\le x[\pi(\ell)]$ for $k<\ell$ we prove easily Eq.(\ref{24}) for this case.

If $x[\pi(\ell)]> \frac{1}{d}$, then since $x[\pi(k)]\ge  x[\pi(\ell)]$ for $k>\ell$, we get
\begin{eqnarray}
1=x[\pi(0)]+...+x[\pi(\ell)]+...+x[\pi(d-1)]\ge {\cal L}(\ell;{\bf x})+(d-1-\ell)x[\pi(\ell)]\ge {\cal L}(\ell;{\bf x})+\frac{d-1-\ell}{d}
\end{eqnarray}
From this follows Eq.(\ref{24}) for this case.
\end{proof}

\begin{definition}
Two probability vectors ${\bf x}, {\bf y}$ are comonotonic if $\pi_{\bf x}=\pi_{\bf y}$.
In this case the probability vector $\lambda {\bf x}+(1-\lambda) {\bf y}$ where $0\le \lambda \le 1$, also has the same permutation of ordering of its elements.
\end{definition}
\begin{proposition}\label{pro78}
\mbox{}
\begin{itemize}
\item[(1)]
If ${\bf x}, {\bf y}$ are probability vectors, then for all $\ell$
\begin{eqnarray}\label{hh2}
{\cal L}[\ell;\lambda {\bf x}+(1-\lambda){\bf y}]\ge {\cal L}(\ell;{\bf x})+(1-\lambda){\cal L}(\ell;{\bf y});\;\;\;0\le \lambda\le 1.
\end{eqnarray}
For comonotonic vectors ${\bf x}, {\bf y}$, this becomes equality.
\item[(2)]
If ${\mathfrak D}$ is a doubly stochastic matrix, then ${\cal L}[\ell;{\bf x}{\mathfrak D}]\ge{\cal L}[\ell;{\bf x}]$, for all $\ell$.
This is not true for row Markov matrices, in general.
\end{itemize}
\end{proposition}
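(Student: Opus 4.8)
The plan is to prove part (1) first, since part (2) is a special case of the Hardy–Littlewood–Pólya / majorization machinery that follows most cleanly from it. For part (1), fix $\ell$ and write ${\bf z}=\lambda{\bf x}+(1-\lambda){\bf y}$. The key observation is that the Lorenz value ${\cal L}(\ell;{\bf z})$ equals the minimum, over all subsets $S\subseteq\{0,\dots,d-1\}$ with $|S|=\ell+1$, of $\sum_{i\in S}z(i)$. This is immediate from the definition: ordering in ascending order and summing the smallest $\ell+1$ entries is exactly minimizing the partial sum over $(\ell+1)$-element index sets. Granting this, let $S^*$ be the minimizing set for ${\bf z}$. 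Then
\begin{eqnarray}
{\cal L}(\ell;{\bf z})=\sum_{i\in S^*}z(i)=\lambda\sum_{i\in S^*}x(i)+(1-\lambda)\sum_{i\in S^*}y(i)\ge \lambda{\cal L}(\ell;{\bf x})+(1-\lambda){\cal L}(\ell;{\bf y}),
\end{eqnarray}
because $\sum_{i\in S^*}x(i)$ is some $(\ell+1)$-term partial sum of ${\bf x}$, hence at least the minimal one ${\cal L}(\ell;{\bf x})$, and likewise for ${\bf y}$. This gives the inequality in Eq.(\ref{hh2}). (I notice the statement as typeset omits the $\lambda$ on the first term; I would write $\lambda{\cal L}(\ell;{\bf x})+(1-\lambda){\cal L}(\ell;{\bf y})$.) For the comonotonic case, if $\pi_{\bf x}=\pi_{\bf y}=\pi$ then the same permutation $\pi$ orders ${\bf z}$ as well (as already noted in the preceding definition), so the smallest $\ell+1$ entries of all three vectors sit at the same indices $\pi(0),\dots,\pi(\ell)$, and the inequality above becomes an equality termwise.

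For part (2), I would use the Birkhoff–von Neumann expansion, Eq.(\ref{hh}): write ${\mathfrak D}=\sum_\pi\lambda_{\mathfrak D}(\pi)M_\pi$ with $\sum_\pi\lambda_{\mathfrak D}(\pi)=1$ and $\lambda_{\mathfrak D}(\pi)\ge 0$. Then ${\bf x}{\mathfrak D}=\sum_\pi\lambda_{\mathfrak D}(\pi)\,{\bf x}M_\pi$ is a convex combination of the vectors ${\bf x}M_\pi$. Applying the inequality from part (1) repeatedly (or in its $n$-term form, which follows by induction on the two-term case) gives
\begin{eqnarray}
{\cal L}(\ell;{\bf x}{\mathfrak D})\ge\sum_\pi\lambda_{\mathfrak D}(\pi){\cal L}(\ell;{\bf x}M_\pi)=\sum_\pi\lambda_{\mathfrak D}(\pi){\cal L}(\ell;{\bf x})={\cal L}(\ell;{\bf x}),
\end{eqnarray}
where the middle equality is Eq.(\ref{hh1}), the permutation-invariance of Lorenz values, and the last uses $\sum_\pi\lambda_{\mathfrak D}(\pi)=1$. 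For the final clause, that this fails for general row Markov matrices, I would exhibit a small counterexample: e.g.\ take ${\bf x}={\bf c}=(1,0,\dots,0)$, which is maximally sparse with ${\cal L}(\ell;{\bf c})=\delta(\ell,d-1)$, and let $q$ be a row Markov matrix with first row equal to ${\bf u}=\frac1d(1,\dots,1)$; then ${\bf c}q={\bf u}$ and ${\cal L}(\ell;{\bf u})=\frac{\ell+1}{d}>0={\cal L}(\ell;{\bf c})$ for $\ell<d-1$, so the inequality is violated. The matrix ${\mathfrak U}$ itself works.

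The main obstacle is really just establishing the "min over $(\ell+1)$-subsets" characterization of Lorenz values cleanly, since everything else is convexity bookkeeping; but that characterization is elementary and I would prove it in one line. A secondary point to be careful about is the induction from the two-term convexity inequality to the $n$-term version used in part (2): this is routine but should be stated, since $\lambda^d$-type losses appeared elsewhere in the paper (Eq.(\ref{21}), Eq.(\ref{bb})) and one wants to make clear that no such loss occurs here — the Lorenz inequality is genuinely linear-looking with coefficients $\lambda$ and $1-\lambda$, not $\lambda^d$ and $(1-\lambda)^d$, precisely because ${\cal L}$ is a pointwise minimum of linear functionals rather than a product.
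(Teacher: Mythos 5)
Your proofs of part (1) and of the positive half of part (2) are correct and essentially identical to the paper's: the ``minimum over $(\ell+1)$-element subsets'' characterization of ${\cal L}(\ell;\cdot)$ is a set-theoretic rephrasing of the paper's argument via the ordering permutation $\pi_z$ of ${\bf z}$ (your $S^*$ is $\{\pi_z(0),\dots,\pi_z(\ell)\}$), and the Birkhoff--von Neumann route to ${\cal L}(\ell;{\bf x}{\mathfrak D})\ge{\cal L}(\ell;{\bf x})$ is exactly what the paper does. You are also right that the $\lambda$ missing from the first term of Eq.(\ref{hh2}) is a typo, and your remark that the convexity here is genuinely linear (no $\lambda^d$ loss, unlike Eq.(\ref{21})) is a worthwhile clarification.

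However, your counterexample for the final clause of part (2) is wrong: it establishes the opposite of what is needed. The claim to refute is that ${\cal L}(\ell;{\bf x}q)\ge{\cal L}(\ell;{\bf x})$ holds for every row Markov matrix $q$; a counterexample must therefore exhibit ${\cal L}(\ell;{\bf x}q)<{\cal L}(\ell;{\bf x})$ for some $\ell$, i.e.\ a case where ${\bf x}q$ is \emph{more} sparse than ${\bf x}$. You take ${\bf x}={\bf c}$ and obtain ${\cal L}(\ell;{\bf c}q)={\cal L}(\ell;{\bf u})=\frac{\ell+1}{d}>0={\cal L}(\ell;{\bf c})$, which is an instance of the inequality \emph{holding} (strictly), not failing. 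Worse, ${\bf c}$ is the one starting point from which no counterexample is possible, since ${\bf c}$ majorizes every probability vector and hence ${\cal L}(\ell;{\bf c})\le{\cal L}(\ell;{\bf y})$ for every ${\bf y}$; and the matrix ${\mathfrak U}$ you offer is doubly stochastic, so it can never violate the inequality. The fix is to start from the opposite end: take ${\bf x}={\bf u}$ and any row Markov matrix $q$ for which ${\bf u}q$ is not uniform (e.g.\ $q=M_f$ with $f$ non-bijective, or the paper's Eq.(\ref{AB1}) with $a=b=\frac{1}{2}$, which gives ${\bf u}q=(\frac{1}{6},\frac{1}{2},\frac{1}{3})$); then ${\cal L}(0;{\bf u}q)=\frac{1}{6}<\frac{1}{3}={\cal L}(0;{\bf u})$, and the inequality genuinely fails.
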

\begin{proof}
\mbox{}
\begin{itemize}
\item[(1)]
We consider the probability vector
\begin{eqnarray}
{\bf z}=\lambda {\bf x}+(1-\lambda){\bf y}.
\end{eqnarray}
Then
\begin{eqnarray}\label{V1}
{\cal L}(\ell;{\bf z})&=&z[\pi_z(0)]+...+ z[\pi_z(\ell)]\nonumber\\&=&\{\lambda x[\pi_z(0)]+(1-\lambda) y[\pi_z(0)]\}+...+
\{\lambda x[\pi_z(d-1)]+(1-\lambda) y[\pi_z(d-1)]\}
\end{eqnarray}
$\pi_z$ is the ordering permutation for ${\bf z}$, and in general it is different from the 
ordering permutation $\pi_x$ for ${\bf x}$ and $\pi_y$ for ${\bf y}$. Therefore
\begin{eqnarray}\label{V2}
&&x[\pi_z(0)]+ ...+x[\pi_z(\ell)]\ge {\cal L}(\ell;{\bf x})\nonumber\\
&&y[\pi_z(0)]+ ...+y[\pi_z(\ell)]\ge {\cal L}(\ell;{\bf y})
\end{eqnarray}
Combining Eqs(\ref{V1}), (\ref{V2}) we prove Eq.(\ref{hh2}).

For comonotonic probability vectors ${\bf x}, {\bf y}$, the ${\bf z}$ has the same ordering permutation as the ${\bf x}, {\bf y}$.
Then Eqs.(\ref{V2}) become equalities and  therefore Eq.(\ref{hh2}) becomes equality.

\item[(2)]
The proof for doubly stochastic matrices is based on Eq.(\ref{hh2}), taking into account Eqs.(\ref{hh}), (\ref{hh1}).
We note that Eq.(\ref{hh1}) is not valid for general permutation with repetition matrices $M_f$ that appear in the expansions of row Markov matrices.
The following example shows that  ${\cal L}(\ell;{\bf x})$ can be  greater than ${\cal L}(\ell;{\bf x}q)$.

For the row Markov matrix in Eq.(\ref{AB1}) with $a=b=\frac{1}{2}$, we find that
\begin{eqnarray}\label{AB3}
{\bf u}q=\left (\frac{1}{6}, \frac{1}{2}, \frac{1}{3}\right ).
\end{eqnarray}
Therefore
\begin{eqnarray}
{\cal L}(0;{\bf u}q)=\frac{1}{6};\;\;\;{\cal L}(1;{\bf u}q)=\frac{1}{2};\;\;\;{\cal L}(2;{\bf u}q)=1.
\end{eqnarray}
In this example ${\cal L}(\ell;{\bf u})=\frac{\ell+1}{3}$ is greater than ${\cal L}(\ell;{\bf u}q)$.
\end{itemize}
\end{proof}

\subsection{Majorization of probability vectors: `more sparse'}
Majorization is a preorder that has been used in various areas (e.g., \cite{MAJ}), including quantum physics\cite{MA0,MA1,MA2,MA3}.
It provides an ordinal description of the sparsity in probability vectors (i.e., that one probability vector is more sparse than another).

\begin{definition}
 If ${\bf x}, {\bf y}$ are two probability vectors, then ${\bf x}\succ {\bf y}$ (${\bf x}$ majorizes ${\bf y}$ or ${\bf x}$ is `more sparse' than ${\bf y}$ ) if ${\cal L}(\ell;{\bf x})\le {\cal L}(\ell;{\bf y})$ for all $\ell$.
\end{definition}
In this case the large probabilities are larger in ${\bf x}$ than in ${\bf y}$, and therefore the `certainty' related to  ${\bf x}$ is larger than the certainty related to ${\bf y}$ . 
In other words, the probability   distribution in ${\bf x}$ is `more certain' than the distribution in ${\bf y}$. 
Clearly for any vector ${\bf x}$
\begin{eqnarray}\label{19}
{\bf u}\prec{\bf x}\prec {\bf c}
\end{eqnarray}
where ${\bf u}$ and ${\bf c}$ have been defined in Eq.(\ref{3D}).

It is easily seen that
\begin{eqnarray}\label{mn}
&&{\bf x}\prec{\bf x}\nonumber\\
&&{\bf z}\prec{\bf y}{\;\;\rm and\;\;}{\bf y}\prec{\bf x}\;\;\rightarrow\;\;{\bf z}\prec {\bf x}
\end{eqnarray}
and therefore ${\bf x}\succ {\bf y}$ is a preorder in ${\cal P}$.

Majorization is intimately related to doubly stochastic matrices. ${\bf x}\succ {\bf y}$ if and only if there exists a doubly stochastic matrix ${\mathfrak D}$ such that ${\bf y}={\mathfrak D}{\bf x}$.
Indeed if ${\bf y}={\mathfrak D}{\bf x}$ then using Eqs(\ref{hh}),(\ref{hh1}),(\ref{hh2}) we prove that
${\cal L}(\ell,{\bf x})\ge {\cal L}(\ell,{\bf y})$ for all $\ell$, i.e., that ${\bf x}\succ {\bf y}$.
The converse is also true\cite{MAJ}.

\begin{remark}\label{rem2}
Majorization is an ordinal approach to the concept `more certain' or `more sparse'. 
Entropic quantities can be used  to quantify these concepts.
There is a relationship between majorization and entropic quantities which involves the concept of
Schur concave functions, i.e., functions $\phi({\bf x})$ for which
\begin{eqnarray}
{\bf x}\succ{\bf y}\;\;\rightarrow\;\;\phi({\bf x})\le \phi ( {\bf y}).
\end{eqnarray}
Many entropic quantities are known to be  Schur concave functions(e.g., \cite{BZ}), and for them the entropy corresponding 
${\bf x}$ is less than the entropy corresponding to ${\bf y}$. We do not purse further the `entropic direction' in this paper.

\end{remark}

\section{The Gini index as an indicator of the sparsity of probability vectors}

The Gini index quantifies the variability in a probability distribution.
It is an alternative to variance and standard deviation, and below (in remark \ref{rem1}) we compare and contrast these two quantities.
It has been used in Mathematical Economics for the characterisation of wealth inequality.

In this section Lorenz values are used to define the Gini index which quantifies the sparsity in probability vectors.
It has attractive properties (e.g. proposition \ref{pro56} ) which can be used to quantify the concepts `uncertainty increase' or `information loss'. 
In this sense it is a complementary quantity to entropy.

\begin{proposition}\label{proG1}
The Gini index is defined by the following relations which are equivalent to each other:
\begin{itemize}
\item[(1)]
\begin{eqnarray}\label{85}
{\cal G}({\bf x})&=&1-\frac{2}{d+1}\sum _{\ell=0}^{d-1}{\cal L}(\ell;{\bf x})=\frac{d-1}{d+1}-\frac{2}{d+1}\sum _{\ell=0}^{d-2}{\cal L}(\ell;{\bf x})\nonumber\\&=&
1-\frac{2}{d+1}\{dx[\pi(0)]+(d-1)x[\pi(1)]+...+x[\pi(d-1)]\}
\end{eqnarray}

\item[(2)]
\begin{eqnarray}\label{3d}
{\cal G}({\bf x})=\frac{1}{2(d+1)}\sum _{r,s}|x(r)-x(s)|
\end{eqnarray}
\end{itemize}
\end{proposition}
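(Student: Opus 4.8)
The plan is to verify both items by elementary bookkeeping with the ordered probabilities. Throughout I would fix the ordering permutation $\pi=\pi_{\bf x}$ of Eq.(\ref{pp}) and abbreviate $y_k=x[\pi(k)]$, so that $y_0\le y_1\le\dots\le y_{d-1}$ and $\sum_k y_k=1$; all three displayed expressions in item (1) are then symmetric functions expressed through the $y_k$, and item (2) will be rewritten in the same variables.

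First I would dispatch the two equalities \emph{within} item (1). The passage from $1-\frac{2}{d+1}\sum_{\ell=0}^{d-1}{\cal L}(\ell;{\bf x})$ to $\frac{d-1}{d+1}-\frac{2}{d+1}\sum_{\ell=0}^{d-2}{\cal L}(\ell;{\bf x})$ is immediate, since ${\cal L}(d-1;{\bf x})=1$ just peels off the last term of the sum and $1-\frac{2}{d+1}=\frac{d-1}{d+1}$. For the third expression I would interchange the order of summation: $\sum_{\ell=0}^{d-1}{\cal L}(\ell;{\bf x})=\sum_{\ell=0}^{d-1}\sum_{k=0}^{\ell}y_k=\sum_{k=0}^{d-1}(d-k)y_k$, because $y_k$ enters ${\cal L}(\ell;{\bf x})$ exactly for the $d-k$ indices $\ell=k,\dots,d-1$. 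This sum is precisely $d\,y_0+(d-1)y_1+\dots+y_{d-1}$, which yields the last line of item (1).

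Next I would prove that item (2) equals item (1). Since $|x(r)-x(s)|$ is symmetric in $r,s$ and vanishes on the diagonal, $\sum_{r,s}|x(r)-x(s)|=2\sum_{r<s}|y_r-y_s|=2\sum_{r<s}(y_s-y_r)$, the last step using the ascending order of the $y_k$. Counting multiplicities, $y_k$ appears with coefficient $+1$ in each of the $k$ pairs having smaller index and with coefficient $-1$ in each of the $d-1-k$ pairs having larger index, so $\sum_{r<s}(y_s-y_r)=\sum_k(2k-d+1)y_k$, giving $\frac{1}{2(d+1)}\sum_{r,s}|x(r)-x(s)|=\frac{1}{d+1}\sum_k(2k-d+1)y_k$. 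On the other hand, writing the constant $1$ as $\sum_k y_k$, the last expression of item (1) becomes $\sum_k\big(1-\tfrac{2(d-k)}{d+1}\big)y_k=\frac{1}{d+1}\sum_k(2k-d+1)y_k$. The two sides coincide, which completes the argument.

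There is no genuine obstacle here: the proposition is a chain of identities rather than an inequality or an existence statement. The only thing to be careful about is the combinatorial bookkeeping — correctly counting how many Lorenz values each $y_k$ enters, correctly resolving $|y_r-y_s|$ via the ordering, and keeping track of the factor of $2$ produced by symmetrising the double sum over $(r,s)$ down to the sum over $r<s$.
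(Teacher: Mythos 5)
Your proof is correct and follows essentially the same route as the paper: both reduce everything to the ordered values $x[\pi(k)]$, resolve the absolute values in the double sum via the ordering, and match coefficients against the Lorenz-value expression. Your coefficient count $\sum_{r<s}\bigl(x[\pi(s)]-x[\pi(s)]\text{-type terms}\bigr)=\sum_k(2k-d+1)x[\pi(k)]$ is in fact slightly cleaner than the paper's displayed computation, which contains an off-by-one slip (it writes $(r-1)x[\pi(r)]$ where $\sum_{s<r}$ has $r$ terms, so the coefficient should be $r$); your bookkeeping avoids this and arrives at the correct identity.
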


\begin{proof}
Using the ordering of the probabilities in Eq.(\ref{pp}) we get
\begin{eqnarray}
\sum _{r,s}|x(r)-x(s)|&=&\sum _{r,s}|x[\pi(r)]-x[\pi(s)]|\nonumber\\&=&
2\sum _{r=1}^{d-1}\sum _{s< r}\{x[\pi(r)]-x[\pi(s)]\}\nonumber\\&=&
2\sum _{r=1}^{d-1} (r-1)x[\pi(r)]-2\sum _{r=1}^{d-1}{\cal L}(r-1;{\bf x})
\end{eqnarray}
We note that
\begin{eqnarray}
\sum _{r=1}^{d-1} (r-1)x[\pi(r)]+\sum _{r=1}^{d-1}{\cal L}(r-1;{\bf x})=d-1
\end{eqnarray}
and we get
\begin{eqnarray}
\sum _{r,s}|x(r)-x(s)|=4\sum _{r=1}^{d-1} (r-1)x[\pi(r)]-2(d-1).
\end{eqnarray}
From this follows Eq.(\ref{85}).
\end{proof}

From Eq.(\ref{24}) follows that
\begin{eqnarray}\label{www}
0\le {\cal G}({\bf x})\le \frac{d-1}{d+1}.
\end{eqnarray}
Large values of the Gini index indicate sparse (certain) probability vectors.
Small values of the Gini index indicate uncertain probability vectors.

\begin{example}
The Gini indices of the vectors ${\bf u}$ and ${\bf c}$ are
\begin{eqnarray}
{\cal G}({\bf u})=0;\;\;\;{\cal G}({\bf c})=\frac{d-1}{d+1}
\end{eqnarray}
\end{example}

\begin{proposition}\label{pro56}
\mbox{}
\begin{itemize}
\item[(1)]
If ${\bf x}$, ${\bf y}$ are probability vectors, then
\begin{eqnarray}\label{pp5}
{\cal G}[\lambda {\bf x}+(1-\lambda){\bf y}]\le \lambda {\cal G}({\bf x})+(1-\lambda){\cal G}({\bf y});\;\;\;0\le \lambda\le 1.
\end{eqnarray}
For comonotonic vectors ${\bf x}, {\bf y}$, this becomes equality.
\item[(2)]
If ${\mathfrak D}$ is a doubly stochastic matrix, then 
\begin{eqnarray}\label{P1}
{\cal G}({\bf x}{\mathfrak D})\le {\cal G}({\bf x})
\end{eqnarray}
If the doubly stochastic matrix is a permutation matrix $M_\pi$ then
\begin{eqnarray}\label{P2}
{\cal G}({\bf x}M_{\pi})={\cal G}({\bf x})
\end{eqnarray}
But for a row Markov matrix $q$ the ${\cal G}({\bf x}q)$ might be greater than ${\cal G}({\bf x})$.

\end{itemize}
\end{proposition}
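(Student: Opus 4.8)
The plan is to derive all three parts from the concavity of the Lorenz values already proved in Proposition \ref{pro78}(1) together with the permutation invariance of the Gini index. For Part (1), I would sum the inequality ${\cal L}[\ell;\lambda{\bf x}+(1-\lambda){\bf y}]\ge\lambda{\cal L}(\ell;{\bf x})+(1-\lambda){\cal L}(\ell;{\bf y})$ over $\ell=0,\dots,d-1$ and substitute into the definition Eq.(\ref{85}); since the Lorenz sum enters there with the negative coefficient $-2/(d+1)$, the inequality reverses and Eq.(\ref{pp5}) follows. For comonotonic ${\bf x},{\bf y}$ the combination $\lambda{\bf x}+(1-\lambda){\bf y}$ shares their ordering permutation, so the Lorenz step holds termwise with equality and hence so does Eq.(\ref{pp5}). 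Iterating the two-vector inequality shows ${\cal G}$ is convex on every finite convex combination of probability vectors.

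For the permutation statement Eq.(\ref{P2}) I would note that $M_\pi$ merely reorders the components of ${\bf x}$, so the double sum $\sum_{r,s}|x(r)-x(s)|$ in the alternative formula Eq.(\ref{3d}) is untouched (equivalently, feed Eq.(\ref{hh1}) into Eq.(\ref{85})). Part (2) then follows from the Birkhoff--von Neumann expansion Eq.(\ref{hh}): writing ${\mathfrak D}=\sum_\pi\lambda_{\mathfrak D}(\pi)M_\pi$ makes ${\bf x}{\mathfrak D}=\sum_\pi\lambda_{\mathfrak D}(\pi)({\bf x}M_\pi)$ a convex combination of the permutation images ${\bf x}M_\pi$, so the finite-combination form of Part (1) gives ${\cal G}({\bf x}{\mathfrak D})\le\sum_\pi\lambda_{\mathfrak D}(\pi){\cal G}({\bf x}M_\pi)$, and Eq.(\ref{P2}) together with $\sum_\pi\lambda_{\mathfrak D}(\pi)=1$ collapses the right-hand side to ${\cal G}({\bf x})$, which is Eq.(\ref{P1}).

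For the final assertion only a single counterexample is needed. I would reuse the matrix of Eq.(\ref{AB1}) with $a=b=1/2$, for which Eq.(\ref{AB3}) gives ${\bf u}q=(1/6,1/2,1/3)$; here ${\cal G}({\bf u})=0$ while a one-line evaluation of Eq.(\ref{85}) gives ${\cal G}({\bf u}q)=1/6>0$, so a row Markov matrix can strictly increase the Gini index. The only thing to watch is the order of the arguments — the permutation invariance Eq.(\ref{P2}) must be in hand before Part (2), since the latter reduces the doubly stochastic action to an average over permutations — but there is no substantive obstacle here, as all the real work was already done in Propositions \ref{pro78} and \ref{proG1}.
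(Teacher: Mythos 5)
Your proposal is correct and follows essentially the same route as the paper: part (1) from the Lorenz concavity of Proposition \ref{pro78}(1) fed into Eq.(\ref{85}) with the sign reversal, Eq.(\ref{P2}) from Eq.(\ref{hh1}) (equivalently Eq.(\ref{3d})), and the same counterexample ${\bf u}q=(1/6,1/2,1/3)$ with ${\cal G}({\bf u}q)=1/6$. The only cosmetic difference is in Eq.(\ref{P1}): the paper cites the Lorenz-level monotonicity of Proposition \ref{pro78}(2), whereas you rerun the identical Birkhoff--von Neumann argument directly at the Gini level via convexity and permutation invariance --- the same ingredients, so no substantive divergence.
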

\begin{proof}
\begin{itemize}
\item[(1)]
This follows from Eq.(\ref{85}) and the first part of proposition \ref{pro78}.
\item[(2)]
Eq.(\ref{P1}) follows from Eq.(\ref{85}) and the second part of proposition \ref{pro78}.
Eq.(\ref{P2}) follows from Eq.(\ref{hh1}).

An example where the ${\cal G}({\bf x}q)$ is greater than ${\cal G}({\bf x})$, is as follows.
For the uncertain probability vector in Eq.(\ref{3D}), we get ${\cal G}({\bf u})=0$.
For the probability vector ${\bf r}={\bf u}q$ in Eq.(\ref{AB3}), we get ${\cal G}({\bf r})=\frac{1}{6}$.

\end{itemize}
\end{proof}
The change in the Gini index of a probability vector ${\bf x}$ when it is multiplied by a row Markov matrix $q$, is:
\begin{eqnarray}\label{R}
\Delta {\cal G}({\bf x},{\bf x}q)={\cal G}({\bf x})-{\cal G}({\bf x}q).
\end{eqnarray}
The  $\Delta {\cal G}({\bf x},{\bf x}q)$ quantifies the change in the uncertainty as we go from the probability vector ${\bf x}$ to the probability vector ${\bf x}q$.
If $q$ is a doubly stochastic matrix ${\mathfrak D}$ then $\Delta {\cal G}({\bf x},{\bf x}{\mathfrak D})\ge 0$.
Also
\begin{eqnarray}\label{R}
\Delta {\cal G}({\bf x},{\bf x}{\mathfrak U})={\cal G}({\bf x}).
\end{eqnarray}
If $q$ is a row Markov matrix which is not doubly stochastic, then the probability vector ${\bf x}q$ might be more sparse (more certain) than ${\bf x}$, and
the $\Delta {\cal G}({\bf x},{\bf x}q)$ might be negative.

\begin{remark}\label{rem1}
Eq.(\ref{3d}) for the Gini index should be compared and contrasted with the variance $V$ related to these probabilities which is given by
\begin{eqnarray}
V=\frac{1}{2d^2}\sum _{r,s}[x(r)-x(s)]^2.
\end{eqnarray}
It is seen that the variance is the sum of the squares of the differences, whilst the Gini index is the sum of the absolute values of the differences.
It is easily seen that
\begin{eqnarray}
d^2V\le (d+1){\cal G}({\bf x}).
\end{eqnarray}

In our context properties like proposition \ref{pro56}, make the Gini index an attractive quantity for the characterisation of uncertainty.
\end{remark}

\subsection{Bounds for the averages in terms of the Gini index}\label{sec12}

We consider a quantity that takes the values $0,...,d-1$ with probabilities $x(0),...,x(d-1)$, correspondingly.
The average value of this quantity is
\begin{eqnarray}\label{order}
\langle{\bf x}\rangle=0\cdot x(0)+x(1)+...+(d-1)x(d-1).
\end{eqnarray}
In many cases with a finite number of items there is no `natural order'. The values $0,...,d-1$ are just labels and with
a permutation $\wp$ we can change  the value $\ell$ into $\wp (\ell)$. Then the average value changes into
\begin{eqnarray}\label{ord}
\langle{\bf x}\rangle_\wp &=&0\cdot x[\wp (0)]+x[\wp (1)]+...+(d-1) x[\wp (d-1)]\nonumber\\
&=&\wp^{-1}(0)x(0)+\wp^{-1}(d-2)x(1)+...+\wp^{-1}(d-1)x(d-1).
\end{eqnarray}
We note that the order $\pi$ of the values of the probabilities in Eq.(\ref{pp}) defines uniquely the Gini index.
In contrast, the average depends on another order $\wp$ that is defined by the physical problem.
The following proposition provides an interval where $\langle{\bf x}\rangle_\wp$  belongs.
\begin{proposition}\label{GG3}
For any permutation $\wp$, the average $\langle{\bf x}\rangle_\wp$  belongs to the interval
\begin{eqnarray}\label{39}
 \frac{d-1}{2}-\frac{d+1}{2}{\cal G}({\bf x})\le  \langle{\bf x}\rangle _\wp \le \frac{d-1}{2}+\frac{d+1}{2}{\cal G}({\bf x})
 \end{eqnarray}
\end{proposition}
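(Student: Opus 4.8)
The plan is to estimate $\langle{\bf x}\rangle_\wp$ by re-expressing it in terms of the probabilities listed in the order $\pi=\pi_{\bf x}$ of Eq.(\ref{pp}) and then invoking the Lorenz-value bound of Proposition \ref{L1}. The key observation is that, whatever the permutation $\wp$, the average $\langle{\bf x}\rangle_\wp$ is a sum $\sum_i \wp^{-1}(i)\,x(i)$ in which the coefficients $\wp^{-1}(0),\dots,\wp^{-1}(d-1)$ are exactly the integers $0,1,\dots,d-1$ in some order. So $\langle{\bf x}\rangle_\wp$ is a weighted sum of the probabilities $x(0),\dots,x(d-1)$ with weights that are a permutation of $\{0,\dots,d-1\}$. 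By the rearrangement inequality, among all such weightings the maximum is obtained by pairing the largest weight $d-1$ with the largest probability $x[\pi(d-1)]$, the next weight $d-2$ with $x[\pi(d-2)]$, and so on, i.e. the maximum equals $\sum_{k=0}^{d-1} k\,x[\pi(k)]$; symmetrically the minimum equals $\sum_{k=0}^{d-1}(d-1-k)\,x[\pi(k)]$.

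First I would record these two extremal values. Writing $S_{\max}=\sum_{k=0}^{d-1}k\,x[\pi(k)]$ and $S_{\min}=\sum_{k=0}^{d-1}(d-1-k)\,x[\pi(k)]$, note that $S_{\max}+S_{\min}=(d-1)\sum_k x[\pi(k)]=d-1$, so it suffices to show $S_{\max}\le \tfrac{d-1}{2}+\tfrac{d+1}{2}{\cal G}({\bf x})$; the lower bound for $S_{\min}$ then follows by subtraction, and every $\langle{\bf x}\rangle_\wp$ lies between $S_{\min}$ and $S_{\max}$. To bound $S_{\max}$ I would use the third form of the Gini index in Eq.(\ref{85}),
\[
{\cal G}({\bf x})=1-\frac{2}{d+1}\bigl\{d\,x[\pi(0)]+(d-1)x[\pi(1)]+\dots+x[\pi(d-1)]\bigr\},
\]
i.e. $\tfrac{d+1}{2}{\cal G}({\bf x})=\tfrac{d+1}{2}-\sum_{k=0}^{d-1}(d-k)\,x[\pi(k)]$. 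Since $\sum_k x[\pi(k)]=1$, we have $\sum_{k=0}^{d-1}(d-k)x[\pi(k)]=d-\sum_k k\,x[\pi(k)]=d-S_{\max}$, hence $\tfrac{d+1}{2}{\cal G}({\bf x})=\tfrac{d+1}{2}-d+S_{\max}=S_{\max}-\tfrac{d-1}{2}$. This is in fact an identity: $S_{\max}=\tfrac{d-1}{2}+\tfrac{d+1}{2}{\cal G}({\bf x})$ exactly, and therefore $S_{\min}=\tfrac{d-1}{2}-\tfrac{d+1}{2}{\cal G}({\bf x})$, which gives Eq.(\ref{39}).

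The one step that needs care — and which I expect to be the main obstacle to writing cleanly rather than a genuine difficulty — is justifying that $S_{\max}$ and $S_{\min}$ really are the extreme values of $\langle{\bf x}\rangle_\wp$ over all $\wp$. This is the rearrangement inequality: for fixed sequences, the sum $\sum_i a_i b_{\sigma(i)}$ is maximised when $a$ and $b_{\sigma}$ are sorted the same way and minimised when sorted oppositely. Here one sequence is the probabilities $x[\pi(0)]\le\dots\le x[\pi(d-1)]$ (already sorted by definition of $\pi$) and the other is the multiset $\{0,1,\dots,d-1\}$ of coefficients $\wp^{-1}(\cdot)$; as $\wp$ ranges over all permutations, the coefficient attached to $x[\pi(k)]$ ranges over all bijections onto $\{0,\dots,d-1\}$, so the rearrangement inequality applies verbatim. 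Once that is in place the rest is the short algebraic manipulation above using Eq.(\ref{85}), and the proof is complete. Alternatively, one can avoid citing the rearrangement inequality and instead prove the two inequalities in Eq.(\ref{39}) directly by Abel summation, rewriting $\langle{\bf x}\rangle_\wp=\sum_i \wp^{-1}(i)x(i)$ via partial sums and bounding each partial sum by the Lorenz estimate ${\cal L}(\ell;{\bf x})\le\frac{\ell+1}{d}$ of Proposition \ref{L1}; I would mention this as the self-contained route if a referee prefers not to invoke rearrangement.
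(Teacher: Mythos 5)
Your main argument is correct, and it takes a genuinely different route from the paper's. The paper works with the tail sums $S(\ell)=x[\wp(\ell)]+\dots+x[\wp(d-1)]$, sandwiches each one between ${\cal L}(d-\ell-1;{\bf x})$ and $1-{\cal L}(\ell-1;{\bf x})$ (using that a Lorenz value is the smallest possible sum of that many probabilities), and then sums over $\ell$ to get the two bounds via Eq.(\ref{85}). You instead invoke the rearrangement inequality to identify the exact maximum $S_{\max}=\sum_k k\,x[\pi(k)]$ and minimum $S_{\min}=\sum_k(d-1-k)x[\pi(k)]$ of $\langle{\bf x}\rangle_\wp$ over all $\wp$, and then verify by a short computation with Eq.(\ref{85}) that $S_{\max}=\frac{d-1}{2}+\frac{d+1}{2}{\cal G}({\bf x})$ and $S_{\min}=d-1-S_{\max}$. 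Your algebra checks out, and your route buys something the paper's statement does not make explicit: the bounds in Eq.(\ref{39}) are attained (for $\wp=\pi_{\bf x}$ and its reversal), so the interval is sharp. One minor quibble with your sketched alternative: the self-contained Abel-summation route does not rest on the bound ${\cal L}(\ell;{\bf x})\le\frac{\ell+1}{d}$ of Proposition \ref{L1}, but rather on the extremal characterisation of Lorenz values as the minimal (and, via $1-{\cal L}$, maximal) partial sums --- that is what the paper actually uses; citing Proposition \ref{L1} there would not give you the Gini-dependent bounds. This does not affect your main proof, which stands on its own.
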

\begin{proof}
We first consider the sum 
\begin{eqnarray}
S(\ell)=x[\wp (\ell)]+...+x[\wp (d-1)];\;\;\;S(0)=1.
\end{eqnarray}
Clearly
\begin{eqnarray}
{\cal L}(d-\ell-1;{\bf x})\le S(\ell)\le 1-{\cal L}(\ell -1;{\bf x})
\end{eqnarray}
because ${\cal L}(d-\ell -1;{\bf x})$ is the sum of the lowest $d-\ell$ probabilities, and $1-{\cal L}(\ell -1;{\bf x})$ is the sum of the highest $d-\ell $ probabilities. Therefore
\begin{eqnarray}\label{A1}
\langle{\bf x}\rangle_{\wp}=\sum _{r=1}^{d-1}S(r)\ge\sum _{r=1}^{d-1}{\cal L}(d-r-1;{\bf x})=A.
\end{eqnarray}
Also
\begin{eqnarray}\label{A2}
\langle{\bf x}\rangle_{\wp}=\sum _{r=1}^{d-1}S(r)\le\sum _{r=1}^{d-1}[1-{\cal L}(r-1;{\bf x})]=d-1-A.
\end{eqnarray}
Eq.(\ref{85}) shows that
\begin{eqnarray}
A=\frac{d+1}{2}\left [\frac{d-1}{d+1}-{\cal G}({\bf x})\right]
\end{eqnarray}
Using this with Eqs(\ref{A1}),(\ref{A2}) we prove the proposition.
\end{proof}

\subsection{Local and total Gini indices for random safes}
We consider the $d\times d$ row Markov matrix $q$, describing an ensemble of safes (random safe).
As explained earlier $q(i,j)$ is the probability that the integer in the $i$-position of the sequence that opens the safe, is $j$.
For a fixed row $i$, the $q(i,j)$ is a probability vector and we can calculate its Gini index. 
We order the $i$-row of $q$ in ascending order as
\begin{eqnarray}
q[i,\pi_i(0)]\le...\le q[i,\pi_i(d-1)],
\end{eqnarray}
where $\pi _i$ is a permutation of the $d$ probabilities $q(i,j)$ in the $i$-row.
Then
\begin{eqnarray}\label{G1}
{\cal G}_i={\cal G}[q(i,j)]=
1-\frac{2}{d+1}\{dq[i,\pi_i(0)]+...+q[i,\pi_i(d-1)]\};\;\;\;0\le {\cal G}_i\le \frac{d-1}{d+1}.
\end{eqnarray}
This is a local Gini index describing the sparsity of the probability vector for the integer in the $i$-position of the sequence that opens the random safe.
 Then the `Gini vector' 
\begin{eqnarray}\label{G1V}
{\cal G}=({\cal G}_0,...,{\cal G}_{d-1}),
\end{eqnarray}
consists of all local Gini indices, and it does not depend on correlations between the various integers in the sequence that opens the random safe.

We also consider the joint probabilities $q(f_i)$ ($i=0,...,d^d-1$) which form a probability vector with $d^d$ components, and define its Gini index.
We order the $d^d$ joint probabilities $q(f_i)$ in ascending order as
\begin{eqnarray}
q[\pi(f_0)]\le...\le q[\pi(f_{d^d-1})]\},
\end{eqnarray}
where $\pi$ is a permutation of the $d^d$ functions $f_i\in {\cal F}$. Then
\begin{eqnarray}\label{G2}
 {\cal G}_T={\cal G}[q(f)]=
1-\frac{2}{d^d+1}\left \{d^dq[\pi(f_0)]+...+q[\pi(f_{d^d-1})]\right \};\;\;\;0\le {\cal G}_T\le\frac{d^d-1}{d^d+1}.
\end{eqnarray}
${\cal G}_T$ is a total Gini index describing the sparsity of the joint probabilities describing the ensemble of safes.
${\cal G}_T$ depends on correlations between the various integers in the sequence that opens the random safe.

\begin{example}
This is a continuation of example \ref{ex24}.
We consider the row Markov matrix in Eq.(\ref{AB1}) and its expansion in Eq.(\ref{kk}).
The Gini vector is easily found to be
\begin{eqnarray}
{\cal G}=\left (\frac{1-a}{2}, \frac{1-a}{2}, \frac{1-b}{2}\right ).
\end{eqnarray}
The joint probabilities are $a$, $b-a$ and $1-b$, and using the inequalities in Eq.(\ref{AB11}) we see that
\begin{eqnarray}
0\le a\le b-a\le 1-b.
\end{eqnarray} 
Therefore the total Gini index is 
\begin{eqnarray}
{\cal G}_T=1-\frac{2}{28}[3a+2(b-a)+(1-b)]=\frac{13-a-b}{14}.
\end{eqnarray}
\end{example}

\section{Quantum uncertainties in terms of the Gini index}

We  consider a quantum system (qudit) with variables in  ${\mathbb Z}_d$ (e.g.,\cite{V1}). 
$H_d$ is the $d$-dimensional Hilbert space describing this system. 
$|j\rangle$ where $j\in {\mathbb Z}_d$, is an orthonormal basis in $H_d$.
With a finite Fourier transform $F$ 
\begin{eqnarray}\label{FF}
&&F=\frac{1}{\sqrt{d}}\sum _{j,k}\omega_d(jk) \ket{j}\bra{k};\;\;\;\omega_d(\alpha)=\exp \left (i\frac{2\pi\alpha}{d}\right);\;\;\;\alpha,j,k\in{\mathbb Z}_d\nonumber\\
&&F^4={\bf 1};\;\;\;FF^{\dagger}={\bf 1},
\end{eqnarray}
we introduce the dual  basis
\begin{eqnarray}\label{FF}
&&\ket{j}_F=F\ket{j}.
\end{eqnarray}
The term dual refers to Fourier transform (in multipartite systems below we make the distinction between locally dual and globally dual for two different types of Fourier transform).
We call $\ket{j}$ positions and $\ket{j}_F$ momenta.
Using the relation
\begin{eqnarray}
\frac{1}{d}\sum _{k}\omega_d[(j+\ell)k]=\delta(j,-\ell),
\end{eqnarray}
we show that $F^2$ is the parity operator
\begin{eqnarray}\label{parity}
F^2=\frac{1}{d}\sum _{j,k,\ell}\omega_d[(j+\ell)k] \ket{j}\bra{\ell}=\sum _{j}\ket{j}\bra{-j}
\end{eqnarray}
For $d=2$ we get $F^2={\bf 1}$.

Let $\varpi(j)$ and ${\widetilde \varpi}(j)$ be the following orthogonal projectors and their `duals':
\begin{eqnarray}\label{333}
\varpi(j)=\ket{j}\bra{j};\;\;\;{\widetilde \varpi}(j)=F\varpi(j)F^\dagger=\ket{j}_F\; _F\bra{j}.
\end{eqnarray}
Measurements with these projectors on a system with density matrix $\rho$, will give the outcome `yes' with probabilities
\begin{eqnarray}\label{rr}
x_\rho (j)={\rm Tr}[\rho \varpi(j)];\;\;\;{\widetilde x}_\rho (j)={\rm Tr}[\rho {\widetilde \varpi}(j)]=x_{F^{\dagger}\rho F}(j),
\end{eqnarray}
correspondingly.
We refer to the ${\widetilde x}_\rho (j)$ as the dual probabilities. 
The dual probabilities for the density matrix $\rho$, are the probabilities of the Fourier transformed density matrix $F^\dagger\rho F$.
The Gini indices ${\cal G}(\rho)$ and ${\widetilde {\cal G}}(\rho)$ for these two probability vectors are calculated using Eq.(\ref{85}).
They quantify the sparsity of these two probability vectors and it is easily seen that  
\begin{eqnarray}
{\widetilde {\cal G}}(\rho)={\cal G}(F^{\dagger}\rho F);\;\;\;0\le {\cal G}(\rho), {\widetilde {\cal G}}(\rho)\le \frac{d-1}{d+1}.
\end{eqnarray}

In ref\cite{V2} we have used the Gini index to study the uncertainty principle for systems with finite dimensional Hilbert space.
We have shown that ${\cal G}(\rho)+{\widetilde {\cal G}}(\rho)$ cannot take values arbitrarily close to $2\frac{d-1}{d+1}$, i.e., 
we cannot have probability vectors $x_\rho (j)$, ${\widetilde x}_\rho (j)$ which are both `very sparse'.
Based on this we proved that the following `uncertainty coefficient' is greater than zero:
\begin{eqnarray}\label{Q1}
\eta _d=2\frac{d-1}{d+1}-\sup _{\rho }[{\cal G}(\rho)+{\widetilde {\cal G}}(\rho)]>0.
\end{eqnarray}
We use here the supremum of ${\cal G}(\rho)+{\widetilde {\cal G}}(\rho)$ over all density matrices.
$\eta _d$ does not depend on the density matrix.
The uncertainty relation is the fact that $\eta_d$ is non-zero,  and it can be expressed as
\begin{eqnarray}\label{Q2}
\Delta(\rho)\ge \eta _d>0;\;\;\;\Delta(\rho)=2\frac{d-1}{d+1}-[{\cal G}(\rho)+{\widetilde {\cal G}}(\rho)].
\end{eqnarray}
In some sense $\eta_d$ is the analogue of $\frac{1}{2}$ in the infinite systems.
We also gave an upper bound for $\eta_d$ but this is not relevant to the uncertainty principle, and it is not used here.

\section{Multipartite quantum systems as quantum permutations with repetitions and as quantum safes}\label{Q}

The interpretation of multipartite quantum systems as quantum permutations with repetitions and as quantum safes,
allows the use of the formalism in the previous sections, in the present context.
Furthermore, practical calculations of quantum quantities like partial traces, expectation values, etc,  tacitly use permutations with repetitions.

\subsection{Local Fourier transforms}

We consider a $d$-partite system comprised of $d$ components each of which is a qudit.
This system is described with the $d^d$-dimensional Hilbert space ${\mathfrak H}=H_d\otimes ...\otimes H_d$, and let $\ket{j_0,...,j_{d-1}}$ be an orthonormal basis in it.
Also let $F_L$ be local Fourier transforms applied on each of the $d$ components of the system:
\begin{eqnarray}
F_L=F\otimes ...\otimes F;\;\;\;F_L^4={\bf 1};\;\;\;F_LF_L^{\dagger}={\bf 1}.
\end{eqnarray}
The index $L$ in the notation stands for local.
Acting with $F_L$ on the basis $\ket{j_0,...,j_{d-1}}$ we get the locally dual basis
\begin{eqnarray}\label{103}
\ket{j_0,...,j_{d-1}}_{\rm L}=F_L\ket{j_0,...,j_{d-1}}=\ket{j_0}_F\otimes...\otimes\ket{j_{d-1}}_F=\frac{1}{\sqrt{d^d}}\bigotimes _{r=0}^{d-1}\left [\sum _{k_r=0}^{d-1}\omega_{d}(j_rk_r)\ket{k_r}\right ].
\end{eqnarray}
We introduce `quantum permutations with repetitions' or `quantum safes', by labelling  an orthonormal basis in $\mathfrak H$ with the $d^d$ functions in the set ${\cal F}$ (that describes permutations with repetitions):
\begin{eqnarray}\label{kkk}
\ket{f}=\ket{f(0),...,f(d-1)};\;\;\;\sum _f\ket{f}\bra{f}={\bf 1};\;\;\;f\in {\cal F};\;\;\;f(i)\in {\mathbb Z}_d.
\end{eqnarray}
The Hilbert space $H_d\otimes ...\otimes H_d$ describes superpositions of permutations with repetitions or 
superpositions of safes.
For example, let $\rho _A, \rho_B$ be the density matrices
\begin{eqnarray}\label{108}
&&\rho _A=|a|^2\ket{f}\bra{f}+|b|^2\ket{g}\bra{g};\;\;\;|a|^2+|b|^2=1;\;\;\;f,g\in {\cal F}\nonumber\\
&&\rho _B=\rho_A+ab^*\ket{f}\bra{g}+a^*b\ket{g}\bra{f}.
\end{eqnarray}
$\rho _A$ describes a probabilistic combination of the permutations with repetitions $f,g$ (random safes).
$\rho _B$ describes a superposition of the permutations with repetitions $f,g$ (quantum safes).
The difference between the separable density matrix $\rho_A$ and  the entangled state \cite{H} described by the density matrix $\rho _B$, is the off-diagonal terms.

We consider the commuting projectors and their locally dual counterparts
\begin{eqnarray}\label{333a}
&&\Pi(i,j)={\bf 1}\otimes...\otimes {\bf 1}\otimes \varpi(j)\otimes {\bf 1}\otimes...\otimes {\bf 1};\;\;\;\sum_j\Pi(i,j)={\bf 1};\;\;\;[\Pi(i,j),\Pi(k,\ell)]=0\nonumber\\
&&{\widetilde \Pi}(i,j)=F_L\Pi(i,j)F_L^{\dagger}={\bf 1}\otimes...\otimes {\bf 1}\otimes {\widetilde \varpi}(j)\otimes {\bf 1}\otimes...\otimes {\bf 1}
\end{eqnarray}
The index $i$ (with $i=0,...,d-1$) indicates the position of $\varpi(j)$.
They describe measurements with $\varpi(j)$ (and ${\widetilde {\varpi}} (j)$)on the $i$-component of the system.
$i$ is a `position  index' and below it appears mainly in products, while $j$ is a `basis index' (the analogue of `number index' in previous sections) and below it appears mainly in sums.

If $f\in {\cal F}$ is a permutation with repetitions, we consider the $d^d$ projectors and their duals
\begin{eqnarray}\label{333}
&&\Pi(f)=\Pi[f(0),...,f(d-1)]=\varpi[f(0)]\otimes...\otimes \varpi[f(d-1)]=\prod_i\Pi[i,f(i)];\;\;\;\sum_{f}\Pi(f)={\bf 1}\nonumber\\
&&{\widetilde \Pi}(f)=F_L\Pi(f)F_L^\dagger=\prod_i\widetilde \Pi[i,f(i)]
\end{eqnarray}
They describe uncorrelated or independent local measurements on the various components of the system.
The overall outcome is `yes', if the outcome of the measurement $\varpi[f(i)]$ on the $i$-component is `yes', for all components.

\subsection{The Markov matrices formalism in quantum context} \label{markov}

We interpret all quantities for row Markov  matrices introduced earlier, in the present quantum context.
Let $\rho $ be a density matrix and 
\begin{eqnarray}\label{red}
{\breve \rho}_i={\rm Tr}_{k\ne i}\rho,
\end{eqnarray}
be the reduced density matrix of the $i$-component of the system, which is found
by taking the partial trace with respect to all components except $i$ (denoted as ${\rm Tr}_{k\ne i}$).
Then:
\begin{itemize}

\item
The row Markov matrix $q_\rho$ has elements
\begin{eqnarray}\label{333}
q_\rho(i,j)={\rm Tr}[\rho\Pi(i,j)]={\rm Tr}[{\breve \rho}_i\varpi(j)];\;\;\;\sum _jq_\rho(i,j)=1;\;\;\;q_\rho \in {\cal M},
\end{eqnarray}
Usually the row Markov matrices are used to describe discrete time evolution of classical systems. 
In multipartite quantum systems considered here, the row Markov matrices are used in connection with the property $\sum _j{\rm Tr}[\rho\Pi(i,j)]=1$.

The locally dual row Markov matrix has elements
\begin{eqnarray}
{\widetilde q}_\rho(i,j)={\rm Tr}[\rho\widetilde \Pi(i,j)]={\rm Tr}[F_L^\dagger \rho F_L \Pi(i,j)].
\end{eqnarray}

\item
For $\rho=\ket{f} \bra{f}$ where $\ket{f}$ is the state in Eq.(\ref{kkk}), we get
\begin{eqnarray}
q_f(i,j)=\bra{f}\Pi(i,j)\ket{f}=\bra{f(i)}\varpi(j)\ket{f(i)}=\delta(f(i),j)=M_f(i,j);\;\;\;f\in{\cal F}.
\end{eqnarray}
In the quantum context, the matrix $M_f$ describes the probabilities for measurements with $\varpi(j)$ on the state $\ket{f(i)}$.

\item
If $\rho$ is a density matrix and $f\in {\cal F}$ is a permutation with repetitions, the Markov tensor
\begin{eqnarray}
q_\rho(f)={\rm Tr}[\rho \Pi(f)]={\rm Tr}\{\rho[\varpi[f(0)]\otimes...\otimes \varpi[f(d-1)]]\};\;\;\;\sum _{f\in {\cal F}}q_\rho(f)=1,
\end{eqnarray}
 is the joint probability that the measurement $\varpi [f(i)]$ on the $i$-component of the system will give `yes', for all components $i$.
 We note that the $q_\rho(f)$ cannot detect entangling off-diagonal elements $\ket{f}\bra {g}$ in the density matrix.
 For example, in Eq.(\ref{108}) they cannot distinguish the separable density matrix $\rho _A$, from the entangled $\rho _B$.
 
The Markov tensor $q_\rho(f)$ is related to the matrix $q_\rho$ as follows:
\begin{eqnarray}\label{RT1}
q_\rho=\sum _fq_\rho(f)M_f
\end{eqnarray}

We also introduce
the locally dual joint probabilities 
\begin{eqnarray}\label{RT2}
{\widetilde q}_\rho(f)=q_{F_L^\dagger \rho F_L}(f)={\rm Tr}[F_L^\dagger \rho F_L \Pi(f)].
\end{eqnarray}

\item
From the $d\times d$ matrix $q_\rho\in{\cal M}$, we get the product of probabilities
\begin{eqnarray}
{\mathfrak M}_\rho(f)=\prod _i q_\rho(i,f(i)),
\end{eqnarray}
and the correlation coefficients
\begin{eqnarray}\label{cor}
{\cal C}_\rho (f)=q_\rho(f)-{\mathfrak M}_\rho(f);\;\;\;-1\le {\cal C}_\rho (f)\le 1;\;\;\;\sum _{f\in {\cal F}}{\cal C}_\rho(f)=0,
\end{eqnarray}
In a similar way we introduce the locally dual product of probabilities ${\widetilde {\mathfrak M}}_\rho(f)$
and the locally dual correlation coefficients ${\widetilde {\cal C}}_\rho (f)$.

\item
 If  $\rho, \sigma$ are two density matrices, then the $(q_\rho, q_\sigma)$ is defined in analogous way to Eq.(\ref{17}) as
\begin{eqnarray}
(q_\rho,q_\sigma)=\prod _{i=0}^{d-1}\left[\sum_{j=0}^{d-1}{\rm Tr}[\rho\Pi(i,j)]{\rm Tr}[\sigma\Pi(i,j)]\right]=\prod _{i=0}^{d-1}\left[\sum_{j=0}^{d-1}{\rm Tr}[{\breve \rho}_i\varpi(j)]{\rm Tr}[{\breve \sigma}_i\varpi(j)]\right].
\end{eqnarray}
This is the probability that for all $j,i$ 
the measurement $\varpi(j)$ on the $i$ component of the system, will give the same result with both density matrices $\rho$ and $\sigma$.
In other words, for all $i,j$ we perform the measurement $\varpi(j)$ on the $i$ component of a system from an ensemble of systems described by the density matrix $\rho$,
and also on a system from an ensemble of systems described by the density matrix $\sigma$.
We repeat that pair of experiments many times, and $(q_\rho,q_\sigma)$ is the percentage of times that the two experiments give the same result.
In particular
\begin{eqnarray}
(q_\rho,q_\rho)=\prod _{i=0}^{d-1}\left[\sum_{j=0}^{d-1}[{\rm Tr}({\breve \rho}_i\varpi(j))]^2\right].
\end{eqnarray}
Using Eq.(\ref{RRR}) we see that for the states in Eq.(\ref{kkk}) 
\begin{eqnarray}
(q_f,q_g)=(M_f,M_g)=\delta(f,g).
\end{eqnarray}

\end{itemize}

\subsection{Quantum uncertainties in multipartite systems in terms of the Gini index}

Using Eqs.(\ref{G1}), (\ref{G1V}) with the probabilities in Eqs.(\ref{333}),(\ref{333a}) we calculate 
the  Gini vector ${\cal G}(\rho)=({\cal G}_0,...,{\cal G}_{d-1})$ and  the locally dual  Gini vector
${\widetilde {\cal G}}(\rho)={\cal G}(F_L^\dagger\rho F_L)=({\widetilde {\cal G}}_0,...,{\widetilde {\cal G}}_{d-1})$.
Then  Eqs(\ref{Q1}),(\ref{Q2}) show that
\begin{eqnarray}\label{UN1}
\Delta_i(\rho)\ge \eta _d>0;\;\;\;\Delta_i(\rho)=2\frac{d-1}{d+1}-[{\cal G}_i(\rho)+{\widetilde {\cal G}}_i(\rho)],
\end{eqnarray}
This expresses local uncertainty relations in each of the components of this system.

From the joint probabilities $q_\rho(f)$ in Eq.(\ref{RT1}) we can calculate the total Gini index ${\cal G}_T(\rho)$ using Eq.(\ref{G2}). Also 
from the locally dual joint probabilities in Eq.(\ref{RT2}) 
we calculate the locally dual total Gini index ${\widetilde {\cal G}}_T(\rho)$.
The uncertainty relation in Eqs(\ref{Q1}),(\ref{Q2}) becomes here
\begin{eqnarray}\label{UN2}
&&\Delta_T(\rho)\ge \eta _{d^d}>0;\;\;\;\Delta_T(\rho)=2\frac{d^d-1}{d^d+1}-[{\cal G}_T(\rho)+{\widetilde {\cal G}}_T(\rho)]\nonumber\\
&&\eta _{d^d}=2\frac{d^d-1}{d^d+1}-\sup _{\rho }[{\cal G}_T(\rho)+{\widetilde {\cal G}}_T(\rho)].
\end{eqnarray}

\subsection{Example}\label{ex5}

We consider a bipartite system of two qubits and the density matrices 
\begin{eqnarray}\label{cde}
&&\rho=\ket{u}\bra{u};\;\;\;\ket{u}=a\ket{0,0}+b\ket{1,1};\;\;\;|a|^2+|b|^2=1;\;\;\;|a|<|b|\nonumber\\
&&\sigma=\ket{v}\bra{v};\;\;\;\ket{v}=c\ket{0,0}+d\ket{1,0}+e\ket{0,1};\;\;\;|c|^2+|d|^2+|e|^2=1;\;\;\;|e|<|d|<|c|.
\end{eqnarray}
The reduced density matrices are
\begin{eqnarray}
&&{\breve \rho}_0=|a|^2\ket{0}\bra{0}+|b|^2\ket{1}\bra{1}\nonumber\\
&&{\breve \rho}_1=|a|^2\ket{0}\bra{0}+|b|^2\ket{1}\bra{1}\nonumber\\
&&{\breve \sigma}_0=(|c|^2+|d|^2)\ket{0}\bra{0}+|e|^2\ket{1}\bra{1}+ce^*\ket{0}\bra{1}+c^*e\ket{1}\bra{0}\nonumber\\
&&{\breve \sigma}_1=(|c|^2+|e|^2)\ket{0}\bra{0}+|d|^2\ket{1}\bra{1}+cd^*\ket{0}\bra{1}+c^*d\ket{1}\bra{0},
\end{eqnarray}
and from them we find the row Markov matrices of probabilities
\begin{eqnarray}
q_\rho=\begin{pmatrix}
|a|^2&|b|^2\\
|a|^2&|b|^2\\
\end{pmatrix};\;\;\;
q_\sigma=\begin{pmatrix}
|c|^2+|d|^2&|e|^2\\
|c|^2+|e|^2&|d|^2\\
\end{pmatrix}
\end{eqnarray}
The joint probabilities  $q(f(0),f(1))$, the products of probabilities ${\mathfrak M}(f(0),f(1))$,  and the correlations ${\cal C}(f(0),f(1))$
for the density matrices $\rho, \sigma$ are given in table \ref{t2}.

We also used Eqs.(\ref{G1}),(\ref{G2}) to calculate the  Gini vectors
\begin{eqnarray}
{\cal G}(\rho)=\left (\frac{1-2|a|^2}{3}, \frac{1-2|a|^2}{3}\right);\;\;\;{\cal G}(\sigma)=\left (\frac{1-2|e|^2}{3}, \frac{1-2|d|^2}{3}\right)
\end{eqnarray}
and the total Gini indices
\begin{eqnarray}
{\cal G}_T(\rho)=\frac{3-2|a|^2}{5};\;\;\;{\cal G}_T(\sigma)=\frac{3-2|e|^2-|d|^2}{5}.
\end{eqnarray}
It is seen that for small values of $|a|$ the probability vectors $q_\rho(i,j)$ (with fixed arbitrary $i$) are very sparse, and also the joint probability vector $q_\rho(f)$ is very sparse.
Similarly for small values of $|e|$, $|d|$,  the probability vectors $q_\sigma(i,j)$ (with fixed arbitrary $i$) are very sparse, and also the joint probability vector $q_\sigma(f)$ is very sparse.

Furthermore we calculated the
\begin{eqnarray}
&&(q_\rho, q_\sigma)=|a|^2(|ac|^2+|b|^2-|bc|^2)+|ed|^2(|a|^2-|b|^2)^2,\nonumber\\
&&(q_\rho, q_\rho)=(|a|^4+|b|^4)^2,\nonumber\\
&&(q_\sigma, q_\sigma)=[(|c|^2+|d|^2)^2+|e|^4][(|c|^2+|e|^2)^2+|d|^4].
\end{eqnarray}
Some special cases are:
\begin{eqnarray}
&&|a|=|e|=0\;\rightarrow\;(q_\rho, q_\sigma)=0\nonumber\\
&&|a|=1,\;\;|b|=0\;\rightarrow\;(q_\rho, q_\rho)=1\nonumber\\
&&|a|=|b|=\frac{1}{2}\;\rightarrow\;(q_\rho, q_\rho)=\frac{1}{4}\nonumber\\
&&|c|=|d|=|e|=\frac{1}{3}\;\rightarrow\;(q_\sigma, q_\sigma)=\frac{25}{81}.
\end{eqnarray}
In the case $|a|=|e|=0$ the measurements $\varpi(j)$ on the various components of the system described by $\rho$ will never give the same results as the corresponding measurements 
on the system described by $\sigma$.
In the case $|a|=1$ and $|b|=0$, these measurement on two states from the ensemble described by $\rho$ will always give the same result, etc.

\section{Global Fourier transforms}\label{GF}

The quantities in the previous section do not depend on entangling off-diagonal elements in the density matrix
 (e.g., the quantities in the example in section \ref{ex5} depend only on $|a|^2, |b|^2,...$).
This motivates the introduction of the globally dual quantities in this section based on a `global Fourier transform'. 
The quantities in this section depend on off-diagonal elements that entangle the various components of the system. 
We note here that the non-diagonal elements are necessary but not sufficient requirement for entanglement.

We consider a bijective map between $({\mathbb Z}_d)^d={\mathbb Z}_d\times ...\times {\mathbb Z}_d$ and ${\mathbb Z}_{d^d}$ as follows.
We first take each $j_r$ in the `period' $0\le j_r\le d-1$ and $\widehat j$ in the `period'  $0\le \widehat j\le d^d-1$, and introduce the bijective map
\begin{eqnarray}\label{sss}
j=(j_0,...,j_{d-1})\;\leftrightarrow\;\widehat j=j_0+j_1d+...+j_{d-1}d^{d-1}.
\end{eqnarray}
We then take each $j_r$ modulo $d$ and the $\widehat j$ modulo $d^d$, and we get a bijective map from $({\mathbb Z}_d)^d$ to ${\mathbb Z}_{d^d}$.
We note that if $j=(j_0,...,j_{d-1})$ and $k=(k_0,...,k_{d-1})$ then
\begin{eqnarray}
\widehat j\widehat k=j_0k_0+d(j_0k_1+j_1k_0)+...+d^{d-1}(j_0k_{d-1}+...+j_{d-1}k_0)
\end{eqnarray}

The Hilbert space ${\mathfrak H}$ is isomorphic to $H_{d^d}$ (a $d^d$-dimensional Hilbert space describing systems with variables in ${\mathbb Z}_{d^d}$). But the $({\mathbb Z}_d)^d$  as a ring (with addition and multiplication componentwise), is not isomorphic to the ring ${\mathbb Z}_{d^d}$ because $\widehat j+\widehat k\ne \widehat {j+k}$ and $\widehat j\cdot\widehat k\ne \widehat {j\cdot k}$.
For example, in the case $d=3$ we get
\begin{eqnarray}
&&\widehat{(2,1,2)}+\widehat{(1,1,0)}=27;\;\;\;\widehat{(2,1,2)+(1,1,0)}=\widehat{(0,2,2)}=24\nonumber\\
&&\widehat{(2,1,2)}\cdot\widehat{(1,1,0)}=23\cdot4=92=11;\;\;\;\widehat{(2,1,2)\cdot (1,1,0)}=\widehat{(2,1,0)}=5.
\end{eqnarray}
It is seen that $[{\mathbb Z}_{3}]^3$ is non-isomorphic to ${\mathbb Z}_{27}$.
Consequently our `local formalism' in the phase space $({\mathbb Z}_d\times{\mathbb Z}_d)^d$ is different from our `global formalism' in the phase space ${\mathbb Z}_{d^d}\times{\mathbb Z}_{d^d}$.
In this paper we introduce Fourier transforms in both cases.

We  introduce the global Fourier transform in ${\mathfrak H}$:
\begin{eqnarray}
&&F_G=\frac{1}{\sqrt{d^d}}\sum _{{\widehat j},{\widehat k}}\omega_{d^d}(\widehat j\widehat k) \ket{j_0,...,j_{d-1}}\bra{k_0,...,k_{d-1}};\;\;\;\omega_{d^d}(\alpha)=\exp \left (i\frac{2\pi\alpha}{d^d}\right);\;\;\;\alpha\in{\mathbb Z}_{d^d}\nonumber\\
&&\omega_{d^d}(\widehat j\widehat k)=\omega_{d^d}[j_0k_0+d(j_0k_1+j_1k_0)+...+d^{d-1}(j_0k_{d-1}+...+j_{d-1}k_0)]\nonumber\\
&&F_G^4={\bf 1};\;\;\;F_GF_G^{\dagger}={\bf 1};\;\;\;F_G\ne F_L.
\end{eqnarray}
The index $G$ in the notation stands for global.

Using the fact that
\begin{eqnarray}
\frac{1}{{d^d}}\sum _{ \widehat k}\omega_{d^d}[(\widehat j+\widehat \ell )\widehat k] =\delta (\widehat j+\widehat \ell ,0);\;\;\;\widehat j, \widehat \ell\in {\mathbb Z}_{d^d}
\end{eqnarray}
in conjunction with the fact that ${\widehat j}+{\widehat \ell}=0$ (${\rm mod}(d^d)$) implies $j_r+\ell _r=0$ (${\rm mod}(d)$),
we prove that $F_G^2$ is the parity operator
\begin{eqnarray}
&&F_G^2=\frac{1}{{d^d}}\sum _{\widehat j, \widehat k, \widehat \ell}\omega_{d^d}[(\widehat j+\widehat \ell )\widehat k] \ket{j_0,...,j_{d-1}}\bra{\ell_0,...,\ell_{d-1}}=
\sum_{j_0,...,j_{d-1}}\ket{j_0,...,j_{d-1}}\bra{-j_0,...,-j_{d-1}}.
\end{eqnarray}
Also using Eq.(\ref{parity}) we prove that $F_L^2$ is the parity operator. Therefore
\begin{eqnarray}
F_G^2=F_L^2.
\end{eqnarray}
Extra care is required in practical calculations, with the modular arithmetic of the indices.

$F_G$ are global transformations in the sense that they cannot be written as $U_0\otimes ...\otimes U_{d-1}$ where $U_r$ are local unitary transformations.
Acting with $F_G$ on the basis $\ket{j_0,...,j_{d-1}}$ we get the globally dual basis
\begin{eqnarray}
&&\ket{j_0,...,j_{d-1}}_{\rm G}=F_G\ket{j_0,...,j_{d-1}}
=\frac{1}{\sqrt{d^d}}\bigotimes _{r=0}^{d-1}\left [\sum _{k_r=0}^{d-1}\omega_{d^d}[(j_0d^r+...+j_{d-r-1}d^{d-1})k_r]\ket{k_r}\right ]
\end{eqnarray}
The states $\ket{j_0,...,j_{d-1}}_{\rm G}$  are `global', in the sense that the coefficients $\omega_{d^d}[(j_0d^r+...+j_{d-r-1}d^{d-1})k_r]$ of the vectors in the $r$-component depend on all $j_0,...,j_{d-1}$.
Information from all components $j_0,...,j_{d-1}$ is needed to determine each of these coefficients.
In the local Fourier transform of Eq.(\ref{103}), the coefficients $\omega_{d}(j_rk_r)$ in the $r$-component depend only on $j_r$.

We will use the term `globally Fourier transformed factorisable states' for the states $F_G\ket{s}$ where $\ket{s}$ are factorisable states.
The example below shows that a globally Fourier transformed factorisable state is in general an entangled state, but in special cases it can be factorisable.

It is easily seen that
\begin{eqnarray}
&&_L\langle\ell _0,...,\ell_{d-1}\ket{j_0,...,j_{d-1}}_{\rm G}=\frac{1}{d^d}\sum_{\widehat k}\omega_{d^d}(\widehat j \widehat k)\omega_d[-(\ell_0k_0+...+\ell_{d-1}k_{d-1})]\nonumber\\
&&|\langle\ell _0,...,\ell_{d-1}\ket{j_0,...,j_{d-1}}_{\rm G}|^2=|\langle\ell _0,...,\ell_{d-1}\ket{j_0,...,j_{d-1}}_{\rm L}|^2=\frac{1}{d^d}
\end{eqnarray}

\begin{example}
In the case  $d=3$, we act with $F_G$ on the factorisable state $\ket{j_0,j_1,j_2}$ and we get
\begin{eqnarray}\label{dd}
&&\ket{j_0,j_1,j_2}_{\rm G}
=\frac{1}{\sqrt{27}}\sum _{k_0,k_1,k_2}\omega_{27}[j_0k_0+3(j_1k_0+j_0k_1)+9(j_2k_0+j_1k_1+j_0k_2)]\ket{k_0,k_1,k_2}\nonumber\\
&&=\frac{1}{\sqrt{27}}\sum _{k_0}\omega_{27}[j_0k_0+3j_1k_0+9j_2k_0]\ket{k_0}\otimes
\sum _{k_1}\omega_{27}(3j_0k_0+9j_1k_1)\ket{k_1}\otimes
\sum _{k_2}\omega_{27}(9j_0k_2)\ket{k_2}
\end{eqnarray}
This is an example of a globally Fourier transformed factorisable state which is factorisable.

Also we act with $F_G$ on the factorisable state
\begin{eqnarray}
\ket{s}=a\ket{j_0,j_1,j_2}+b\ket{r_0,j_1,j_2};\;\;\;|a|^2+|b|^2=1,
\end{eqnarray}
and we get the following globally Fourier transformed factorisable state which is entangled (when $a,b\ne 0$):
\begin{eqnarray}\label{dd}
F_G\ket{s}&=&\frac{a}{\sqrt{27}}\sum _{k_0}\omega_{27}[j_0k_0+3j_1k_0+9j_2k_0]\ket{k_0}\otimes
\sum _{k_1}\omega_{27}(3j_0k_0+9j_1k_1)\ket{k_1}\otimes
\sum _{k_2}\omega_{27}(9j_0k_2)\ket{k_2}\nonumber\\
&+&\frac{b}{\sqrt{27}}\sum _{k_0}\omega_{27}[r_0k_0+3j_1k_0+9j_2k_0]\ket{k_0}\otimes
\sum _{k_1}\omega_{27}(3r_0k_0+9j_1k_1)\ket{k_1}\otimes
\sum _{k_2}\omega_{27}(9r_0k_2)\ket{k_2}.
\end{eqnarray}

\end{example}

\subsection{Globally dual quantities}
Acting with $F_G$ on the quantities in section \ref{Q} we get their duals denoted with a `hat'.
We introduce them because they depend on the entangling off-diagonal elements, which do not enter in the quantities in section \ref{Q}.
 For example, we introduce the projectors
\begin{eqnarray}
\widehat \Pi(i,j)=F_G\Pi(i,j)F_G^{\dagger};\;\;\;
\widehat \Pi(f)=F_G\Pi[f(0),...,f(d-1)]F_G^{\dagger}=\prod_i\widehat \Pi[i,f(i)],
\end{eqnarray}
the row Markov matrix $\widehat q_\rho$ with the globally dual probabilities
\begin{eqnarray}
\widehat q_\rho(i,j)={\rm Tr}[\rho\widehat \Pi(i,j)]={\rm Tr}[F_G^\dagger\rho F_G \Pi(i,j)]=q_{F_G^\dagger\rho F_G}(i,j),
\end{eqnarray}
the globally dual products of probabilities
\begin{eqnarray}
{\widehat {\mathfrak M}}_\rho(f)=\prod _i \widehat q_\rho(i,f(i))={\mathfrak M}_{F_G^\dagger\rho F_G}(f),
\end{eqnarray}
and the globally dual joint probabilities
\begin{eqnarray}\label{qaz}
\widehat q_\rho(f)={\rm Tr}[\rho  \widehat  \Pi(f)]=q_{F_G^\dagger\rho F_G}(f).
\end{eqnarray}
So the globally dual probabilities for the density matrix $\rho$, are the probabilities of the Fourier transformed density matrix $F_G^\dagger\rho F_G$.
Using the globally dual probabilities  we  can calculate the globally dual correlation coefficients
${\widehat {\cal C}}_\rho (f)$ using Eq.(\ref{cor}), and we see that
\begin{eqnarray}
{\widehat {\cal C}}_\rho (f)={\cal C}_{F_G^\dagger\rho F_G} (f).
\end{eqnarray}
We also calculate the globally dual Gini vector ${\widehat{\cal G}}(\rho)=({\widehat {\cal G}}_0,...,{\widehat {\cal G}}_{d-1})$ using Eqs.(\ref{G1}), (\ref{G1V}) and the   globally dual total Gini index $\widehat {\cal G}_T(\rho)$ using Eq.(\ref{G2}),
and we get
\begin{eqnarray}
{\widehat{\cal G}}(\rho)={\cal G}(F_G^\dagger\rho F_G);\;\;\;\widehat {\cal G}_T(\rho)={\cal G}_T(F_G^\dagger\rho F_G).
\end{eqnarray}
Since ${\widehat{\cal G}}_i(\rho)$ and ${\cal G}_i(\rho)$ are related through a Fourier transform, we have the uncertainty relations
\begin{eqnarray}\label{U1}
&&D_i(\rho)\ge {\widehat \eta} _d>0;\;\;\;D_i(\rho)=2\frac{d-1}{d+1}-[{\cal G}_i(\rho)+{\widehat {\cal G}}_i(\rho)]\nonumber\\
&&{\widehat \eta} _{d}=2\frac{d-1}{d+1}-\sup _{\rho }[{\cal G}_i(\rho)+{\widehat {\cal G}}_i(\rho)].
\end{eqnarray}
All the component systems are the same, and therefore ${\widehat \eta} _{d}$ does not depend on the index $i$.
Also the ${\widehat{\cal G}}_T(\rho)$ and ${\cal G}_T(\rho)$ are related through a Fourier transform, and we have the uncertainty relation
\begin{eqnarray}\label{U2}
&&D_T(\rho)\ge {\widehat \eta} _{d^d}>0;\;\;\;D_T(\rho)=2\frac{d^d-1}{d^d+1}-[{\cal G}_T(\rho)+{\widehat{\cal G}}_T(\rho)]\nonumber\\
&&{\widehat \eta} _{d^d}=2\frac{d^d-1}{d^d+1}-\sup _{\rho }[{\cal G}_T(\rho)+{\widehat {\cal G}}_T(\rho)].
\end{eqnarray}

\begin{example}
We consider a tripartite system of qutrits ($d=3)$. In the Hilbert space $H_3\otimes H_3\otimes H_3$
we consider the density matrices
\begin{eqnarray}
&&\rho=\ket{r}\bra {r};\;\;\;\ket{r}=\frac{1}{\sqrt{3}}[\ket{0,0,0}+\ket{1,1,0}+\ket{2,2,1}]\nonumber\\
&&\sigma=\frac{1}{3}[\ket{0,0,0}\bra{0,0,0}+\ket{1,1,0}\bra{1,1,0}+\ket{2,2,1}\bra{2,2,1}],
\end{eqnarray}
Using the global Fourier transform described in Eq.(\ref{dd}), we calculated
the row Markov matrix $\widehat q_\rho$ that contains the globally dual probabilities:
\begin{eqnarray}
\widehat q_\rho=\begin{pmatrix}
0.325&0.422&0.253\\
0.394&0.322&0.284\\
0.333&0.333&0.333
\end{pmatrix};\;\;\;
\widehat q_\sigma=\begin{pmatrix}
0.333&0.333&0.333\\
0.333&0.333&0.333\\
0.333&0.333&0.333
\end{pmatrix}
\end{eqnarray}
We then put the probabilities in each row in ascending order, and  we calculated the globally dual Gini vectors
\begin{eqnarray}
{\widehat{\cal G}}(\rho)=(0.085, 0.055,0);\;\;\;{\widehat{\cal G}}(\sigma)=(0,0,0).
\end{eqnarray}
We also calculated the $27$  globally dual joint probabilities in Eq.(\ref{qaz}),  put them in ascending order,
and found the  globally dual total Gini indices
\begin{eqnarray}
\widehat {\cal G}_T(\rho)=0.430;\;\;\;\widehat {\cal G}_T(\sigma)=0.
\end{eqnarray}
The results for $\rho$ are different from the results for $\sigma$, and the difference between these two density matrices are off-diagonal entangling elements.
Therefore the globally dual quantities depend on the off-diagonal entangling elements.

\end{example}

\subsection{Open problems}
In this subsection we mention briefly some open problems.
\begin{itemize}
\item
The `local formalism' in the phase space $({\mathbb Z}_d\times{\mathbb Z}_d)^d$ is different from the `global formalism' in the phase space ${\mathbb Z}_{d^d}\times{\mathbb Z}_{d^d}$.
In a multipartite system where the various parties are distinct physical systems that do not interact with each other, you can argue that the 
phase space $({\mathbb Z}_d\times{\mathbb Z}_d)^d$ is more physical. But if the various parties interact with each other, this is no longer true.
In this paper we introduced Fourier transforms in both cases, and in further work other phase space quantities (e.g., displacement operators, Wigner and Weyl functions, etc) can also be defined for the two cases.
Comparison of the two formalisms for various examples, might shed light into the
relationship between global transformations and entanglement.
\item
We can define `globally Fourier transformed separable states' as $F_G\rho_{\rm sep}F_G^{\dagger}$ where $\rho_{\rm sep}$ is a separable mixed state.
Comparison of the two phase space formalisms for various examples, might shed light into various aspects of the entanglement for mixed states.
\item
In many multipartite systems there is no natural ordering of the various components.
Therefore instead of using Eq.(\ref{sss}) we can first perform a permutation $\pi$ on the various components:
\begin{eqnarray}
j=(j_0,...,j_{d-1})\;\rightarrow\;j_\pi=(j_{\pi(0)},...,j_{\pi(d-1)})\;\rightarrow\;\widehat j_\pi=j_{\pi(0)}+j_{\pi (1)}d+...+j_{\pi(d-1)}d^{d-1}.
\end{eqnarray} 
So there are many global Fourier transforms (one for each permutation).
\item
A more general problem will be to consider 
an $n$-partite system comprised of $n$ components each of which is a qudit with $n\ne d$.
In the language of permutations with repetitions we have sequences of $n$ integers from ${\mathbb Z}_d$, and we deal with non-square matrices.
It is not clear how the material in section 4, can be generalised to non-square matrices.
\item
In this paper a mathematical structure (permutations with repetitions), is transformed with quantum techniques into a more general structure (the fundamental difference is the 
concept of superpositions).
This could be applied to many algebraic structures.

\end{itemize}

\section{Discussion}

We have blended ideas from three different areas.
\begin{itemize}
\item
The first area is row Markov matrices.
 We introduced expansions for row Markov matrices, in terms of matrices related to permutations with repetitions.
We interpreted this in terms of random safes described by the Markov matrices.
In the expansion of Eq.(\ref{ghd}) the coefficients are joint probabilities, and in the expansion in Eq.(\ref{gh}) the coefficients are products of probabilities.
The difference between the two are the correlations in Eq.(\ref{COR}).

\item
The second area is Lorenz values and the Gini index.
We used them to quantify the sparsity of probability vectors.
The properties of Lorenz values have been presented in propositions \ref{L1}, \ref{pro78}, and the properties of the Gini index in propositions \ref{proG1}, \ref{pro56}, \ref{GG3}.
In the context of random safes we introduced  in Eqs(\ref{G1}),(\ref{G1V}) the Gini vector that describes the sparsity 
in the local probability vector for  each of the  integers in the sequence that opens a random safe. 
We also introduced the total Gini index of Eq.(\ref{G2}) that describes the sparsity of the 
joint probabilities.
\item
The third area is multipartite quantum systems.
We viewed them as quantum  permutations with repetitions and as quantum safes, and then used the above two formalisms in a quantum context. 
In section \ref{markov} we presented the Markov matrix formalism in a quantum context.
This led to novel statistical quantities that describe classical and quantum correlations in multipartite quantum systems.
Local Fourier transforms led to locally dual statistical quantities.
Global Fourier transforms led to globally dual statistical quantities which
 depend on off-diagonal elements that entangle the various components of the system.
In Eqs(\ref{UN1}),(\ref{UN2}),(\ref{U1}),(\ref{U2}) we gave uncertainty relations in terms of Gini indices. 

\end{itemize}

The first two parts are related to classical probabilistic multipartite systems (random safes).
The third part is related to quantum multipartite systems (quantum safes).
The work introduces novel methods into multipartite quantum systems.
It also shows that the quantum concept of superposition can be introduced in some mathematical areas (in our case permutations with repetitions) and generalise them into new areas.

\newpage
\begin{table}
\caption{The expansion in Eq.(\ref{gh}) (which assumes independence) for the row Markov matrix in Eq.(\ref{AB1}).
Only $8$ of the terms are assigned non-zero probability and the corresponding matrices $M_f$ together with their joint probabilities ${\mathfrak M}_q(f)$ (which are products of probabilities) for various functions $f\in{\cal F}$, are shown in the first three columns.
A different expansion for the same Markov matrix (in the presence of correlations) is given in Eq.(\ref{kk}) and the corresponding joint probabilities $q(f)$ and correlation coefficients ${\cal C}_q(f)=q(f)-{\mathfrak M}_q(f)$ are shown
in the last two columns.}
\def\arraystretch{2}
\begin{tabular}{|c|c|c||c|c|}\hline
$(f(0),f(1),f(2))$&$M_f$&${\mathfrak M}_q(f)$&$q(f)$&${\cal C}_q(f)$\\\hline
$(0,1,1)$&$\begin{pmatrix}
1&0&0\\
0&1&0\\
0&1&0
\end{pmatrix}$&
$a^2(1-b)$&$0$&$-a^2(1-b)$\\\hline
$(0,1,2)$&$\begin{pmatrix}
1&0&0\\
0&1&0\\
0&0&1
\end{pmatrix}$&
$a^2b$&$a$&
$a-a^2b$\\\hline
$(1,1,1)$&$\begin{pmatrix}
0&1&0\\
0&1&0\\
0&1&0
\end{pmatrix}$&
$a(1-a)(1-b)$&$0$&
$-a(1-a)(1-b)$\\\hline
$(1,1,2)$&$\begin{pmatrix}
0&1&0\\
0&1&0\\
0&0&1
\end{pmatrix}$&
$a(1-a)b$&$0$&
$-a(1-a)b$\\\hline
$(0,2,1)$&$\begin{pmatrix}
1&0&0\\
0&0&1\\
0&1&0
\end{pmatrix}$&
$a(1-a)(1-b)$&$0$&
$-a(1-a)(1-b)$\\\hline
$(0,2,2)$&$\begin{pmatrix}
1&0&0\\
0&0&1\\
0&0&1
\end{pmatrix}$&
$a(1-a)b$&$0$&
$-a(1-a)b$\\\hline
$(1,2,1)$&$\begin{pmatrix}
0&1&0\\
0&0&1\\
0&1&0
\end{pmatrix}$&
$(1-a)^2(1-b)$&$1-b$&
$(1-b)(2a-a^2)$\\\hline
$(1,2,2)$&$\begin{pmatrix}
0&1&0\\
0&0&1\\
0&0&1
\end{pmatrix}$&
$(1-a)^2b$&$b-a$&
$2ab-a-a^2b$\\\hline
\end{tabular} \label{t1}
\end{table}

\begin{table}
\caption{The joint probabilities  $q(f(0),f(1))$, the products of probabilities ${\mathfrak M}(f(0),f(1))$,  and the correlations ${\cal C}(f(0),f(1))$
for the density matrices $\rho, \sigma$ in Eq.(\ref{cde}).}
\def\arraystretch{2}
\begin{tabular}{|c||c|c|c||c|c|c|}\hline
$(f(0),f(1))$&$q_\rho(f)$&${\mathfrak M}_\rho(f)$&${\cal C}_\rho(f)$&$q_\sigma(f)$&${\mathfrak M}_\sigma(f)$&${\cal C}_\sigma(f)$\\\hline
$(0,0)$&$|a|^2$&$|a|^4$&$|ab|^2$&$|c|^2$&$|c|^2+|ed|^2$&$-|de|^2$\\\hline
$(0,1)$&$0$&$|ab|^2$&$-|ab|^2$&$|e|^2$&$|d|^2-|ed|^2$&$|e|^2-|d|^2+|ed|^2$\\\hline
$(1,0)$&$0$&$|ab|^2$&$-|ab|^2$&$|d|^2$&$|e|^2-|ed|^2$&$|d|^2-|e|^2+|ed|^2$\\\hline
$(1,1)$&$|b|^2$&$|b|^4$&$|ab|^2$&$0$&$|e|^2|d|^2$&$-|ed|^2$\\\hline
\end{tabular} \label{t2}
\end{table}

\end{document}